\documentclass{svproc}
\usepackage{xcolor}
\usepackage{amsmath,amssymb}
\usepackage{epsfig}
\usepackage{graphicx,float}
\usepackage{subcaption}
\usepackage{url}
\usepackage[linesnumbered, ruled, vlined]{algorithm2e}
\usepackage{adjustbox}
\usepackage{float}
\usepackage{tikz}
\usetikzlibrary{arrows.meta}
\usepackage{caption} 
\DeclareMathOperator{\ILMT}{ILMT}

\date{}
\begin{document}

\mainmatter
\authorrunning{Bonato, Carr, Chaudhary, Marbach, and Mishura}
\title{The Iterated Local Model for Tournaments}

\author{Anthony Bonato\inst{1}\thanks{Supported by an NSERC Discovery Grant.}\and MacKenzie Carr\inst{1} \and Ketan Chaudhary\inst{1}
\and  Trent Marbach\inst{2} \and Teddy Mishura\inst{1}}

\institute{Department of Mathematics, Toronto Metropolitan University \\ Toronto, Ontario, Canada
\and
Department of Mathematics and Statistics, Acadia University \\ Wolfville, Nova Scotia, Canada}

\maketitle

\begin{abstract}
Transitivity is a central, generative principle in social and other complex networks, capturing the tendency for two nodes with a common neighbor to form a direct connection. We propose a new model for highly dense, complex networks based on transitivity, called the Iterated Local Model Tournament (ILMT). In ILMT, we iteratively apply transitivity to form new tournaments by cloning nodes and their adjacencies, and either preserving or reversing the orientation of existing arcs between clones. The resulting model generates tournaments with small diameters and high connectivity as observed in real-world complex networks. We analyze subtournaments or motifs in the ILMT model and their universality properties. For many parameter choices, the model generates sequences of quasirandom tournaments. We also study the graph-theoretic properties of ILMT tournaments, including their cop number, domination number, and chromatic number. We finish with a set of open problems and variants of the ILMT model for oriented graphs.
\end{abstract}

\section{Introduction}

Complex networks are pervasive in the real world, and key observed properties include the small-world effect, scale-free degree distributions, and clustering into dense communities. Such networks are often globally sparse, meaning they have few edges relative to their number of nodes. See the book \cite{bbook} for a survey of complex networks. 

In the present work, we consider highly dense networks in which each pair of nodes is adjacent. \emph{Tournaments} are oriented graphs where each pair of nodes shares exactly one directed edge (or arc). Tournaments are simplified representations of highly interconnected structures in networks. For example, we may envision an active sub-thread on the discussion site Reddit, in which users comment on one another's posts. A tournament arises from such social networks by assigning an arc $(u,v)$ if the user $u$ responds more frequently to the posts of $v$ than $v$ does to $u$. Variants include arcs determined by up- or down-votes. Similarly, densely structured communities also occur on other social media platforms, such as X, Instagram, and TikTok, where subsets of users cluster around a given topic or account. Other examples of tournaments in real-world networks include those in sports, where edges represent one player or team winning over another. Such tournaments on social and other networks grow organically; therefore, it is natural to consider models that simulate their evolution. 

\emph{Balance theory} in undirected graphs cites mechanisms to complete triads (that is, subgraphs consisting of three nodes) in social and other complex networks \cite{easl,he}. A central mechanism in balance theory is \emph{transitivity}: if $x$ is a friend of $y,$ and $y$ is a friend of $z,$ then $x$ is a friend of $z$; see, for example, \cite{scott}. Directed networks of ratings or trust scores, along with models for their propagation, were first considered in \cite{guha}. \emph{Status theory} for directed networks, first introduced in \cite{lesk}, was motivated by both trust propagation and balance theory. While balance theory focuses on likes and dislikes, status theory posits that an arc indicates that the creator of the link views the recipient as having higher status. For example, on X or other social media platforms, an arc denotes one user following another, and the follower may have higher social status. Evidence for status theory was found in directed networks derived from Epinions, Slashdot, and Wikipedia \cite{lesk}. 

Many models simulate the properties of complex networks, such as preferential attachment \cite{ba,bol}. The \emph{Iterated Local Transitivity} (\emph{ILT}) model introduced in \cite{ILT} and further studied in \cite{ILM,ILAT} simulates structural properties in complex networks that emerge from transitivity. Transitivity gives rise to the notion of \emph{cloning}, in which a new node $x$ is adjacent to all the neighbors of an existing node $y$. Note that in the ILT model, nodes exert local influence within their neighborhoods. The ILT model simulates many properties of social networks. For example, as shown in \cite{ILT}, the model-generated graphs densify over time and exhibit poor spectral expansion. In addition, the ILT model generates graphs with the small-world property, which exhibit low diameter and high clustering coefficient relative to random graphs with the same number of nodes and the same expected average degree. A directed analogue of the ILT model was introduced in \cite{ILDT}, and a tournament-based model was first considered in \cite{ILTT}. Other variants of the ILT model include \cite{ILH,FM,IGM,itind}.

We introduce a new deterministic model for complex networks based on transitivity that yields tournaments, thereby extending the model considered in \cite{ILTT}. The \emph{Iterated Local Model for Tournaments} (or \emph{ILMT}) model is defined as follows. Fix a \emph{base} tournament $G_0$ and an infinite binary sequence $s$ named the \emph{generating sequence}. We define a sequence of tournaments $G_t$ (referred to as \emph{time-steps}) as follows. 

For each time-step $t\geq 1$, we obtain the tournament $G_t$ from $G_{t-1}$ by the following steps.
\begin{enumerate}
\item Add a node $x'$ for each node $x$ of $G_{t-1}$.
\item Add the arcs $(u,v'),(u',v)$ for each arc $(u,v)$ in $G_{t-1}$.
\item Add the arc $(x',x)$ for each node $x$ in $G_{t-1}$.
\item For each arc $(u,v)$ in $G_{t-1}$, add the arc $(u',v')$ if $s(t)=1$ or add the arc $(v',u')$ if $s(t)=0$. 
\end{enumerate}
We refer to $x'$ as the \textit{clone} of $x$ in $G_{t}$ and $x$ as the \textit{parent} of $x'$. The tournaments $G_t$ are called \emph{ILMT tournaments}. See Figure~\ref{figILMT} for examples of ILMT tournaments.
\begin{figure}[ht]
    \centering
   \scalebox{0.7}{\begin{tikzpicture}[
    vertex/.style={circle, fill=black, inner sep=1.5pt},
    edge/.style={->, >={latex[length=2.5mm, width=1.5mm]}, shorten >=1pt, shorten <=1pt}
    ]
    
    \begin{scope}
    \node[vertex, label=below:$a$] (a) at (0,0) {};
    \node[vertex, label=above:$b$] (b) at (0,2) {};
    \path[edge] (a) edge (b);
    \node at (0, -1.25) {$G_0$};
\end{scope}

\begin{scope}[xshift=2.5cm]
    
    \node[vertex, label=below:$a$] (a) at (0,0) {};
    \node[vertex, label=above:$b$] (b) at (0,2) {};
    \node[vertex, label=below:$a'$] (ap) at (2,0) {};
    \node[vertex, label=above:$b'$] (bp) at (2,2) {};

    \path[edge] (a) edge (b);   
    \path[edge] (a) edge (bp);  
    \path[edge] (ap) edge (b);  
    \path[edge] (ap) edge (a);  
    \path[edge] (bp) edge (b);  
    \path[edge] (ap) edge (bp); 
    
    \node at (1, -1.25) {$G_1$};
\end{scope}
\begin{scope}[xshift = 7cm]
    
    \node[vertex, label=below:$a$] (a) at (0, 0) {};
    \node[vertex, label=above:$b$] (b) at (0, 2) {};
    \node[vertex, label=below:$a'$] (ap) at (2, 0) {};
    \node[vertex, label=above:$b'$] (bp) at (2, 2) {};
    \node[vertex, label=below:$a''$] (app) at (4, 0) {};
    \node[vertex, label=above:$b''$] (bpp) at (4, 2) {};
    \node[vertex, label=below:$(a')'$] (apprime) at (6, 0) {};
    \node[vertex, label=above:$(b')'$] (bpprime) at (6, 2) {};


    \path[edge] (a) edge (b);
    \path[edge] (a) edge (bp);
    \path[edge] (ap) edge (b);
    \path[edge] (ap) edge (a);
    \path[edge] (bp) edge (b);
    \path[edge] (ap) edge (bp);

    \path[edge] (bpp) edge (app);
    \path[edge] (bpprime) edge (app);
    \path[edge] (bpp) edge (apprime);
    \path[edge] (app) edge (apprime);
    \path[edge] (bpp) edge (bpprime);
    \path[edge] (bpprime) edge (apprime);
    
    \path[edge] (app) edge [bend left=20] (a);
    \path[edge] (bpp) edge [bend right=20] (b);
    \path[edge] (apprime) edge [bend left=20] (ap);
    \path[edge] (bpprime) edge [bend right=20] (bp);
    
    \path[edge] (a) edge (bpp);
    \path[edge] (app) edge (b);
    \path[edge] (a) edge (bpprime);
    \path[edge] (app) edge (bp);
    \path[edge] (ap) edge (bpp);
    \path[edge] (apprime) edge (b);
    \path[edge] (ap) edge (app);
    \path[edge] (apprime) edge [bend left] (a);
    \path[edge] (bp) edge (bpp);
    \path[edge] (bpprime) edge [bend right] (b);
    \path[edge] (ap) edge (bpprime);
    \path[edge] (apprime) edge (bp);
    
    \node at (3, -1.25) {$G_2$};
    \end{scope}
\end{tikzpicture}}
    \caption{The first three time-steps of ILMT tournaments, where $G_0$ is a directed edge, and the sequence $s$ has values $s(1) = 1$ and $s(2) = 0.$ In $G_2$, $x''$ denotes the clone of $x$ while $(x')'$ denotes the clone of $x'$, for $x=a,b.$ }
    \label{figILMT}
\end{figure}
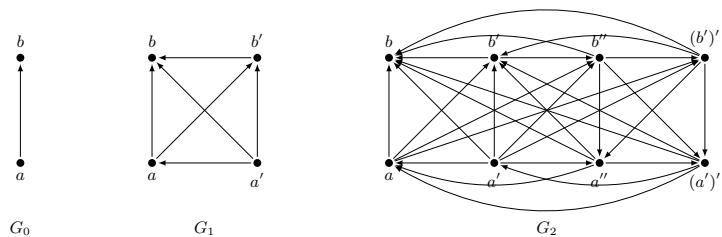

The \textit{reverse} of $G_{t-1}$ if $s(t)=0$. Otherwise, the subtournament of $G_t$ induced by the clones is isomorphic to $G_{t-1}$. We denote $G_t$ more precisely by $\ILMT_{t,s}(G_0)$. For brevity, we often omit explicit references to $G_0$ and the generating sequence $s$. We refer to the entries of $s$ with a zero as its \emph{support}; if there are infinitely many 0-entries, we say that $s$ has \emph{infinite support}. 

The paper is organized as follows. In Section~2, we prove that ILMT tournaments have low diameter and possess high connectivity. We show that ILMT tournaments are universal in the sense that they embed all finite tournaments. We give precise counts in ILMT tournaments of the number of directed 3-cycles and of linear orders of order 3. Quasirandomness is explored in Section~3.  If the generating sequence has infinite support, then ILMT tournaments are quasirandom; see Theorem~\ref{thm:0stepquasi}. In particular, this gives precise, asymptotic counts of any given finite tournament in ILMT tournaments. In Section~4, we consider graph-theoretical properties of ILMT tournaments, including their cop numbers, domination numbers, and chromatic numbers. Our final section includes directions for future research.

Throughout the paper, we consider finite, simple tournaments. A tournament is \emph{strong} if for each pair of nodes $x$ and $y$, there are directed paths connecting $x$ to $y$ and $y$ to $x$. The \emph{in-neighbors} and \emph{out-neighbors} of node $v$ are denoted by $N^-(v)$ and $N^+(v),$ respectively. If $x_1,x_2,\dots ,x_n$ are nodes of a tournament $G$ so that $(x_i,x_j)$ is an arc whenever $i<j$, then $G$ is a \emph{linear order}. We denote the linear order on three nodes by $T_3$ and the directed 3-cycle by $D_3.$ A \emph{sink} is a node with out-degree 0, and a \emph{source} is a node with in-degree 0. For background on tournaments, the reader is directed to \cite{bang-jensen-gutin-2009} and to \cite{west} for background on graphs.  For background on social and complex networks, see \cite{bbook,at}. 

\section{Diameter, Connectedness, and Motifs}

In this section, we show that ILMT tournaments satisfy several properties observed in real-world complex networks, including small diameter and high connectivity. We first show that, under mild restrictions on the base tournament and generating sequence, ILMT tournaments have diameter at most 3.

\begin{theorem}\label{ILMTdiam}
If the tournament $G_0$ has no sink and $s$ has nonempty support, then for sufficiently large $t$, the diameter of $\mathrm{ILMT}_{t,s}(G_0)$ is at most 3.
\end{theorem}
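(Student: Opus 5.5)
The plan is to first find structural properties that become permanent after one $0$-step, and then bound distances in $G_{t+1}$ by a case analysis over the four types of pairs (parent/parent, parent/clone, clone/parent, clone/clone). Throughout, ``parent'' means a node of $G_t$ and ``clone'' means a node added at step $t+1$. The only input from $G_0$ and $s$ should be that, for large $t$, $G_t$ has neither a sink nor a source.

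\emph{Step 1: no sinks and no sources persist.}
In $G_t$, a parent $x$ has out-neighbours $N^+_{G_{t-1}}(x)\cup N^+_{G_{t-1}}(x)'$. Hence $x$ is a sink in $G_t$ if and only if it was a sink in $G_{t-1}$. A clone $x'$ always has the out-arc $(x',x)$, so it is never a sink. By induction, if $G_0$ has no sink then no $G_t$ has a sink.

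Sources behave as follows. A parent $x$ always has the in-arc $(x',x)$, so it is never a source. A clone $x'$ is a source in $G_t$ only if two things hold. First, $x$ is a source of $G_{t-1}$, since the in-arcs of $x'$ from parents come from $N^-_{G_{t-1}}(x)$. Second, $x'$ is a source among the clones. If $s(t)=1$, the second condition just says $x$ is a source of $G_{t-1}$. If $s(t)=0$, it says $x$ is a sink of $G_{t-1}$, and no node is both a source and a sink once $|V(G_{t-1})|\ge 2$. So after the first index $t_0$ in the support of $s$, $G_t$ has no source. Moreover, ``no source'' is preserved at every later step, whether it is a $0$-step or a $1$-step, by the same computation. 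Thus for all $t\ge t_0$, $G_t$ has no source and no sink. This is where both hypotheses are used.

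\emph{Step 2: case analysis in $G_{t+1}$.}
Assume $G_t$ has no source and no sink, and fix distinct nodes $a$ and $b$ of $G_{t+1}$. I would record the arcs available in $G_{t+1}$:
\begin{itemize}
\item $(u,v')$ and $(u',v)$ whenever $(u,v)\in A(G_t)$;
\item $(x',x)$ for every node $x$ of $G_t$;
\item among the clones, a copy or a reversal of $G_t$.
\end{itemize}
The cases then go as follows.
\begin{itemize}
\item Clone $x'$ to parent $v$: use $x'\to x$ and then a path from $x$ to $v$, or $x'\to y$ with $y\in N^+(x)$. If instead $v\to x$, use an in-neighbour $z$ of $v$ (which exists since there is no source) and aim for a path $x'\to z'\to v$ or $x'\to x\to \cdots$.
\item Parent $u$ to clone $v'$: use an out-neighbour $w$ of $u$ (no sink) and the arc $(u,w')$, then try to continue inside the clone copy of $G_t$.
\item Clone to clone: combine the two cases above with the arcs $x'\to x$ and $u\to v'$.
\end{itemize}

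\emph{Main obstacle.}
I expect the hardest case to be parent $u$ to parent $v$ with $(v,u)\in A(G_t)$. The only routes out of $u$ go to $N^+(u)$ and to $N^+(u)'$. The only routes into $v$ come from $v'$, $N^-(v)$ and $N^-(v)'$. So a path of length at most $3$ needs either
\begin{itemize}
\item an arc of the form $w'\to z'$ with $w\in N^+(u)$ and $z\in N^-(v)$, whose direction depends on $s(t+1)$; or
\item an arc $w'\to v$, or $u\to z'$, which requires $w\in N^-(v)$ or $z\in N^+(u)$.
\end{itemize}
With only ``no source/no sink'' this is not automatic. This is why the theorem says ``sufficiently large $t$''.

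My first attempt would be a stronger invariant that becomes true after a few further steps and is then preserved. For example: every node has an out-neighbour that is a clone of itself or of an in-neighbour, or every two nodes have a common out-neighbour or a $2$-path between them. I would prove this invariant for $G_{t_0+2}$ by looking two generations back. There, $u$ and $v$ are themselves parents or clones at the previous step, and the arcs $x''\to x'$, $x'\to x$ produce the missing connections. I would then show that the invariant propagates under both values of $s$, and derive the diameter bound $3$ from it by the case analysis above.
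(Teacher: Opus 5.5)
Your Step 1 is correct. The proposal, however, stops exactly where the proof has to happen: Step 2 only lists routes to ``aim for'', the parent-to-parent case is left open, and the fix is deferred to an invariant at time $t_0+2$ that you never state or prove.

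The missing observation is that, right after a $0$-step, the absence of sinks alone already gives diameter at most $3$. In your hard case ($u,v$ parents, $(v,u)\in A(G_t)$), take any $w\in N^+(u)$. If $(w,v)\in A(G_t)$, then $u\to w'\to v$. If $(v,w)\in A(G_t)$, the reversal in the $0$-step gives the arc $(w',v')$, so $u\to w'\to v'\to v$. What you overlooked is entering $v$ through $v'$ itself.

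The paper packages this as follows. For every arc $(u,v)$ of $G_t$ and $w\in N^+(v)$, the nodes $u,v',u'$ form a $D_3$. Depending on the orientation between $u$ and $w$, either $u\to v\to w'\to u$ or $u\to v\to w'\to u'\to u$ is a directed cycle. Checking the remaining pair types in the same way, every pair of nodes of $G_{t+1}$ lies on a directed cycle of length at most $4$, so $G_{t+1}$ is strong of diameter at most $3$. Hence the bound already holds at the first index $t_0$ with $s(t_0)=0$. ``Sufficiently large'' only means waiting for the first $0$-step, and no look back two generations is needed.

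You are right that ``no source and no sink'' cannot be pushed through a $1$-step; in fact it fails outright. Let $G_t$ consist of two directed $3$-cycles $A,B$ with every arc from $A$ to $B$. It has neither a source nor a sink, yet after a $1$-step there is still no arc from $B\cup B'$ to $A\cup A'$, so $G_{t+1}$ is not even strong.

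The right invariant is the conclusion itself, ``strong with diameter at most $3$'', and it lifts through a $1$-step. A path of length at most $3$ from $u$ to $v\neq u$ in $G_t$ yields one in $G_{t+1}$ from either of $u,u'$ to either of $v,v'$ by swapping the endpoints. This works because $(u',x)$, $(x,v')$ and $(u',v')$ are arcs whenever $(u,x)$, $(x,v)$ and $(u,v)$ are. For the pair $u,u'$, a directed $3$-cycle $u\to w\to x\to u$ (present in any strong tournament) gives $u\to w\to x\to u'$.

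Every later $0$-step re-establishes the property by the argument above, since no $G_t$ has a sink. Induction then finishes the proof for all $t\geq t_0$, and the no-source half of your Step 1 is not needed.
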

\begin{proof}
Let $G_0$ be a base tournament with no sink and let $G_t = \mathrm{ILMT}_{t,s}(G_0)$, for each $t\geq 1$. In particular, $G_0$ must have at least three nodes and $G_t$ has no sink for every $t\geq 0$, since a clone at time $t$ has its parent as an out-neighbor.
    
Suppose $s(t+1) = 0$ for some $t \geq 0$. We show that $G_{t+1}$ must be a strong tournament of diameter at most 3. Pick distinct nodes $u$ and $v$ in $G_{t}$ such that $(u,v)$ is an arc. Since there are no sinks, there must be a third node $w$ in $G_{t}$ that is an out-neighbor of $v$. The nodes $u,v',$ and $u'$ form a $D_3$ in $G_{t+1}$. We also have that either the nodes $u,v,w'$ form a $D_3$ in $G_{t + 1}$ or they form a directed 4-cycle along with $u'$, depending on whether $w$ is an in-neighbor or out-neighbor of $u$, respectively. Altogether, we have that any two nodes in $G_{t+1}$ are in a directed cycle of length at most 4. This is equivalent to the condition that $G_{t+1}$ is strong with diameter at most 3.

Next, we show that if $G_t$ is a strong tournament of diameter at most 3 and $s(t+1) = 1$, then $G_{t+1}$ has diameter at most 3. Let $u,v\in V(G_t)$, $\alpha\in \{u,u'\}$, and $\beta\in \{v,v'\}$, then a directed path from $\alpha $ to $\beta$ can be obtained from a directed path in $G_t$ of length at most $3$ from $u$ to $v$ by replacing $u$ with $\alpha$ and replacing $v$ with $\beta$. The length of the resulting directed path in $G_{t + 1}$ is also at most $3$. Therefore, $G_{t + 1}$ is a strong tournament of diameter at most 3. 

Proceeding inductively, we have that if $t_0 \geq 0$ such that $s(t_0 + 1) = 0$, then $G_t$ is a strong tournament of diameter at most $3$ for all $t > t_0$. Since the support of $s$ is nonempty, such a $t_0$ must exist.\qed
\end{proof}

The \emph{connectivity} $\kappa(T)$ of $T$ is the minimum size of a set of nodes $S \subseteq V(T)$ such that $T \setminus S$ is not strongly connected. A tournament $T$ is called \emph{$k$-connected} if $\kappa(T) \ge k$. As a consequence of the following theorem, the connectivity of ILMT is unbounded as time-steps become large. The proof is omitted for space considerations.

\begin{theorem}
If $G_0$ is a $k$-connected tournament, for some $k\geq 1$, and $G_1 = \mathrm{ILMT}_{1,s}(G_0)$ for some generating sequence $s$, then $G_1$ is $2k$-connected.
\end{theorem}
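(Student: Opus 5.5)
The plan is to split $G_1$ into its two "halves" and use pigeonhole on the deleted set. Let $G_0$ have $n$ nodes. Since $G_0$ is $k$-connected, $n \ge k+1$, and every node of $G_0$ has in-degree and out-degree at least $k$: deleting the out-neighbourhood of a node $x$ leaves $x$ as a sink, and $G_0$ minus that neighbourhood still has at least two nodes, so it is not strong. In $G_1$, let $O = V(G_0)$ be the originals and $C = \{x' : x \in O\}$ the clones.

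\textbf{Both halves are $k$-connected.} The subtournament $G_1[O]$ is $G_0$. The subtournament $G_1[C]$ is isomorphic to $G_0$ if $s(1)=1$, and to the reverse of $G_0$ if $s(1)=0$. Reversing all arcs preserves strong connectivity of every induced subtournament, so $G_1[C]$ is $k$-connected in either case.

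We must show that $G_1 - S$ is strongly connected for every $S$ with $|S| \le 2k-1$. Write $S_O = S\cap O$ and $S_C = S \cap C$. Since $|S_O| + |S_C| \le 2k-1$, at least one of $|S_O| \le k-1$ or $|S_C| \le k-1$ holds. We treat these two cases symmetrically.

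\textbf{Case $|S_O| \le k-1$.} Here $G_0 - S_O$ is strong, and it is nonempty because $n \ge k+1$. It remains to attach each surviving clone $u' \notin S$ to this strong core in both directions.
\begin{itemize}
\item \emph{Out-arc into the core.} Step~2 of the model gives the arcs $(u',v)$ for every $v \in N^+_{G_0}(u)$. Since $|N^+_{G_0}(u)| \ge k > |S_O|$, some such $v$ survives.
\item \emph{In-arc from the core.} Step~2 also gives the arcs $(w,u')$ for every $w \in N^-_{G_0}(u)$. The same degree bound gives a surviving such $w$.
\end{itemize}
Hence every surviving clone lies on a closed walk through the strong core, and $G_1 - S$ is strong.

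\textbf{Case $|S_C| \le k-1$.} Here $G_1[C] - S_C$ is strong and nonempty by the $k$-connectivity of $G_1[C]$. Each surviving original $u$ must be attached to it.
\begin{itemize}
\item \emph{Out-arc into the core.} The arcs $(u,v')$ exist for every $v \in N^+_{G_0}(u)$, giving at least $k$ candidates, so one survives.
\item \emph{In-arc from the core.} The arcs $(w',u)$ exist for every $w \in N^-_{G_0}(u)$, giving at least $k$ candidates, so one survives. The extra arc $(u',u)$ is also available but is not needed.
\end{itemize}
Thus $G_1 - S$ is again strong.

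The main point to check carefully is that the arcs used between originals and clones come only from Steps~2 and~3, whose orientation does not depend on $s(1)$. The dependence on $s$ enters only through $G_1[C]$, and that is handled by the remark that reversal preserves $k$-connectivity. Finally, $|V(G_1)| = 2n \ge 2k+2$, so deleting fewer than $2k$ nodes never trivially destroys the structure, and no set of at most $2k-1$ nodes disconnects $G_1$. Therefore $\kappa(G_1) \ge 2k$.
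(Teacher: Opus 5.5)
Your proof is correct and complete. The paper omits its own proof of this theorem for space, so there is no argument to compare yours against, and your proof supplies the missing argument in full. The ingredients all check out:
\begin{itemize}
\item the minimum in- and out-degree bound $k$ in $G_0$;
\item the fact that both halves $G_1[O]=G_0$ and $G_1[C]\cong G_0$ (or its reverse) are $k$-connected;
\item the pigeonhole step showing one half loses at most $k-1$ nodes;
\item attaching every surviving node of the other half to that strong core in both directions, using only the Step~2 arcs $(u',v)$, $(w,u')$, $(u,v')$, $(w',u)$.
\end{itemize}

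One remark on hypotheses. Your explicit use of the convention $|V(G_0)|\ge k+1$ (the standard definition of $k$-strong) is genuinely needed, not cosmetic. Under the paper's informal definition of $\kappa$, the one-node tournament could count as $1$-connected. In that case $G_1$ is the single arc $(x',x)$, which is not even strong. So the convention you adopt is exactly what makes the statement true. It is also what justifies both your degree bound and the nonemptiness of the strong core.
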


A \emph{motif} is an isomorphism type of a specified tournament. Motifs play an important role in the structure of complex networks, both in graphs and in digraphs; see \cite{Milo}. We provide precise counts of motifs of order 3.

\begin{theorem}\label{ILMTtriples}
Let $G_0$ be a tournament of order $n_0\geq3$, let $s$ be a generating sequence, and for each $t\geq 1$, let $G_t = \mathrm{ILMT}_{t,s}(G_0)$. Let $a_t$ denote the number of subsets of $V(G_t)$ of cardinality 3 that form a $D_3$ and let $b_t$ denote the number of subsets of $V(G_t)$ of cardinality 3 that form a $T_3$. We then have that for all $t\geq 1$,
$$a_t = 2^{s(t)+2}a_{t-1}+(1-s(t)){2^{t-1}n_0+1\choose 3},$$
and for all $t\geq 0$,
$$b_t = {2^tn_0\choose 3} - a_t.$$

\end{theorem}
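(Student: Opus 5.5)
The plan is to count $D_3$'s in $G_t$ directly in terms of those of $G_{t-1}$. Write $n=2^{t-1}n_0=|V(G_{t-1})|$ and let $\pi\colon V(G_t)\to V(G_{t-1})$ send $x$ and $x'$ to $x$. Every $3$-subset $S$ of $V(G_t)$ is of one of two kinds: (A) $\pi$ is injective on $S$, or (B) $S$ contains a parent–clone pair $\{x,x'\}$ together with one further node $w$ or $w'$, where $w\neq x$. The formula for $b_t$ is immediate, since every $3$-subset of a tournament is a $D_3$ or a $T_3$ and $|V(G_t)|=2^tn_0$. So all the work is in the recurrence for $a_t$.

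\emph{Kind (A).} Fix a $3$-subset $\{u,v,w\}$ of $G_{t-1}$. There are $8$ lifts $S$, one for each choice of $u$ or $u'$, $v$ or $v'$, $w$ or $w'$. By steps 2 and 4 of the construction:
\begin{itemize}
\item an arc between two nodes of $S$ that are not both clones has the same orientation as in $G_{t-1}$;
\item an arc between two clones has that orientation if $s(t)=1$ and is reversed if $s(t)=0$.
\end{itemize}
If $s(t)=1$, every lift is isomorphic to $\{u,v,w\}$, so kind (A) contributes $8a_{t-1}$ $D_3$'s. If $s(t)=0$, the lifts with $0$, $1$ or $3$ clones keep the isomorphism type; with $3$ clones all arcs are reversed, and the reverse of $D_3$ is $D_3$ and of $T_3$ is $T_3$. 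The $3$ lifts with exactly two clones reverse exactly one arc. Reversing one arc of a $D_3$ always gives a $T_3$. Reversing one arc of a $T_3$ gives a $D_3$ exactly when the reversed arc is the one from source to sink, which happens for exactly one of the $3$ choices. Hence kind (A) contributes $5a_{t-1}+b_{t-1}=4a_{t-1}+\binom{n}{3}$, using $b_{t-1}=\binom n3-a_{t-1}$.

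\emph{Kind (B).} Here I check orientations case by case. For $\{x,x',w\}$, the node $x'$ beats $x$, and $w$ relates to $x$ and $x'$ in the same direction. This always gives a $T_3$: $x'$ is the source if $(x,w)$ is an arc, and $w$ is the source otherwise. For $\{x,x',w'\}$ with $s(t)=1$, the triple is again a $T_3$ by the same reasoning. For $s(t)=0$, fix an arc $(x,w)$ of $G_{t-1}$. Then $\{x,x',w'\}$ has arcs $(x',x)$, $(x,w')$, $(w',x')$, so it is a $D_3$. The companion triple $\{w,w',x'\}$ has $w'$ as a source, so it is a $T_3$. Thus each of the $\binom n2$ arcs of $G_{t-1}$ yields exactly one $D_3$ of kind (B) when $s(t)=0$, and none when $s(t)=1$.

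Adding the two kinds gives $a_t=8a_{t-1}$ if $s(t)=1$, and $a_t=4a_{t-1}+\binom n3+\binom n2=4a_{t-1}+\binom{n+1}{3}$ if $s(t)=0$ (Pascal's rule). Both cases are the single formula $a_t=2^{s(t)+2}a_{t-1}+(1-s(t))\binom{n+1}{3}$ in the statement. The main obstacle is keeping the orientation bookkeeping for kind (B) and for the two-clone lifts straight; in particular, one must verify that exactly one of the two candidate triples per arc is cyclic, so that nothing is double-counted.
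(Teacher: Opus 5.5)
Your proposal is correct and follows essentially the same argument as the paper. You split $3$-subsets into lifts of $3$-subsets of $G_{t-1}$, which contribute $8a_{t-1}$ when $s(t)=1$ and $5a_{t-1}+b_{t-1}$ when $s(t)=0$, and triples containing a parent--clone pair, which contribute one $D_3$ per arc of $G_{t-1}$ exactly when $s(t)=0$; you then combine these using $b_{t-1}=\binom{n}{3}-a_{t-1}$ and Pascal's rule, as the paper does.
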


\begin{proof}
Let $n_t=2^tn_0$ denote the order of $G_t$, for each $t\geq 0$. Since each $G_t$ is a tournament, $a_t + b_t$ is equal to the number of subsets of $V(G_t)$ of cardinality 3. Therefore, $b_t = {n_{t}\choose 3} - a_t$, for $t\geq 0$. It suffices to show that for $t\geq 1$,
    $$a_t = \begin{cases}
       8a_{t-1} & \text{if }s(t) = 1,\\
       5a_{t-1} + b_{t-1} + {2^{t-1}n_0\choose 2} & \text{if }s(t) = 0.
    \end{cases}$$

Suppose that $s(t) = 1$. No $D_3$ in $G_t$ can consist of an arc of the form $(x',x)$ for $x\in V(G_{t-1})$. Therefore, every subset of $V(G_t)$ of cardinality 3 that forms a $D_3$ in $G_t$ is one of the eight sets obtained from the nodes of a $D_3$ in $G_{t-1}$ by replacing any number of them with their clones. Therefore, $a_t = 8a_{t-1}$. 

Next, suppose that $s(t) = 0$. Fix a subset $S$ of $V(G_{t-1})$ of cardinality 3 that forms a $D_3$. Of the eight sets that one can construct from $S$ by replacing any number of its elements with its clone in $G_t$, only five form a $D_3$. If exactly two nodes of $S$ are replaced with their clones, then the subtournament of $G_t$ induced by the resulting set of nodes is a $T_3$. 
    
If $S$ forms a $T_3$ in $G_{t-1}$, then of the eight sets that one can construct from $S$ by replacing any number of its elements with their clones in $G_t$, we have that seven form a $T_3$ in $G_t$. The only one of these eight sets that forms a $D_3$ in $G_t$ is the one obtained by replacing exactly the source and the sink of $T$ with their clones. Allowing $S$ to range over all cardinality 3 subsets of $V(G_{t-1})$ that form a $D_3$ in $G_{t-1}$ we have that there are $5a_{t-1} + b_{t-1}$ $D_3$'s in $G_t$ that do not contain an arc from a clone to its parent.

Finally, still under the case $s(t) = 0$, suppose that $S$ is a 3-node subtournament of $G_t$ containing some $x\in V(G_{t-1})$ and its clone $x'$. We then have that $S$ forms a $T_3$ if and only if the third node of $T$ is either a neighbor of $x$ in $G_{t-1}$ or the clone of an in-neighbor of $x$ in $G_{t-1}$. If $u$ is an out-neighbor of $x$ in $G_{t-1}$, then $\{x,x',u'\}$ forms a $D_3$. Therefore, the number of cardinality 3 subsets of $V(G_t)$ that form a $D_3$ and contain both a clone and its parent is equal to the total out-degree of $G_{t-1}$, which is equal to the size of $G_{t-1}$. Altogether, we have that 
$$a_t = 5a_{t-1} + b_{t-1} + {2^{t-1}n_0\choose 2},$$
when $s(t) = 0$. Substituting $b_{t-1} = {2^{t-1}n_0 \choose 3} - a_{t-1}$, we obtain the result. \qed
\end{proof}

For a $k$-node subtournament $H$ of $G,$ its \emph{proportion} is the ratio of the number of copies of $H$ in $G$, and the number of $k$-element subsets of $V(G).$ From Theorem~\ref{ILMTtriples}, we derive the asymptotic proportion of motifs of order 3 in ILMT tournaments. The following theorem does so explicitly for the proportion $D_3$'s, with the proportion of $T_3$'s derived by taking the difference with 1. Note that, as expected, we observe that ILMT typically generates a large proportion of $T_3$'s.

\begin{corollary}\label{prop3}
Let $G_0$ be a tournament of order $n_0$ and let $\mu$ be the proportion of $D_3$'s in $G_0$. 
\begin{enumerate}
\item If $s(t)=1$ for all $t\geq 1$, then the proportion of $D_3$'s in $\mathrm{ILMT}_{t,s}(G_0)$ approaches $\frac{n_0(n_0-1)(n_0-2)}{n_0^3}\mu$ as $t\rightarrow \infty$.
\item If $s$ has infinite support, then the proportion of $D_3$'s in $\mathrm{ILMT}_{t,s}(G_0)$ approaches $\frac{1}{4}$ as $t\rightarrow \infty$.
\end{enumerate}  
\end{corollary}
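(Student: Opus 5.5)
The plan is to work directly from the recurrence in Theorem~\ref{ILMTtriples}, namely $a_t = 2^{s(t)+2}a_{t-1}+(1-s(t))\binom{2^{t-1}n_0+1}{3}$. Throughout, $n_t=2^tn_0$ and $p_t = a_t/\binom{n_t}{3}$ denotes the proportion of $D_3$'s in $G_t$.

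\emph{Part 1.} Suppose $s(t)=1$ for all $t$. Then $a_t = 8^t a_0$ and $a_0=\mu\binom{n_0}{3}$, so
$$p_t = \frac{8^t\mu\, n_0(n_0-1)(n_0-2)}{2^tn_0(2^tn_0-1)(2^tn_0-2)}.$$
The denominator is $8^t n_0^3(1+O(2^{-t}))$. Letting $t\to\infty$ gives the limit $\frac{n_0(n_0-1)(n_0-2)}{n_0^3}\mu$. This part is routine.

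\emph{Part 2.} I would convert the recurrence into a recurrence for $p_t$. Write $N=n_{t-1}$, so that $n_t=2N$, and note $\binom{2N}{3}=\frac{2N(2N-1)(2N-2)}{6}\approx 8\binom{N}{3}$, with $\binom{2N}{3}/\binom{N}{3} = 8(1+O(1/N))$.

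If $s(t)=1$, then $p_t = p_{t-1}\cdot 8\binom{N}{3}/\binom{2N}{3} = p_{t-1}(1+O(2^{-t}))$.

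If $s(t)=0$, then $p_t = \bigl(4p_{t-1}\binom{N}{3} + \binom{N+1}{3}\bigr)/\binom{2N}{3} = \tfrac12 p_{t-1} + \tfrac18 + O(2^{-t})$. Equivalently, $p_t - \tfrac14 = \tfrac12(p_{t-1}-\tfrac14) + O(2^{-t})$.

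So define $e_t = |p_t - \tfrac14|$. Using $p\in[0,1]$, the two cases become
$$e_t \le \lambda_t e_{t-1} + C2^{-t},$$
where $\lambda_t = \tfrac12$ if $s(t)=0$ and $\lambda_t = 1$ otherwise. The $1$-steps may enlarge $e$ only by the multiplicative factor $1+O(2^{-t})$, and $\prod_t(1+C2^{-t})$ converges, so this factor can be absorbed into the constant.

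The main step is to show $e_t\to 0$. Unrolling the inequality gives
$$e_t \le e_0\prod_{i\le t}\lambda_i + C\sum_{j\le t} 2^{-j}\prod_{j<i\le t}\lambda_i.$$
The first term tends to $0$ because infinitely many $\lambda_i$ equal $\tfrac12$; this is exactly where infinite support is used. For the sum, fix $\varepsilon>0$ and choose $J$ with $C\sum_{j>J}2^{-j}<\varepsilon$. Each of the finitely many terms with $j\le J$ carries a product containing infinitely many factors $\tfrac12$ as $t\to\infty$, so each tends to $0$. Hence $\limsup_t e_t\le\varepsilon$ for every $\varepsilon>0$, and therefore $p_t\to\tfrac14$.

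The only delicate point I anticipate is controlling the $O(2^{-t})$ error terms uniformly across the $s(t)=1$ steps. This is handled by the convergent product and summable-series argument above, which I expect to go through without trouble.
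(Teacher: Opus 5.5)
Your proof is correct and follows the paper's route. Both arguments turn the recurrence of Theorem~\ref{ILMTtriples} into a perturbed affine recursion with summable $O(2^{-t})$ errors: it is neutral on 1-steps, and on 0-steps it contracts by a factor $\tfrac12$ toward the fixed point that corresponds to proportion $\tfrac14$. The main differences are that the paper normalizes by $8^t$ rather than $\binom{n_t}{3}$, so 1-steps are exact identities and the asymptotics enter only in the final conversion $\binom{2^tn_0}{3}\sim 8^tn_0^3/6$, whereas you absorb the extra 1-step errors directly; your explicit $\varepsilon$-argument also proves the convergence to the fixed point that the paper only asserts.
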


\begin{proof}
By Theorem~\ref{ILMTtriples} we have that $$\frac{a_t}{8^t} = 2^{s(t)+2}\frac{1}{8}\cdot\frac{a_{t-1}}{8^{t-1}}+\frac{1-s(t)}{8}\frac{1}{2^{3(t-1)}}{2^{t-1}n_0+1\choose 3}.$$ Note that $\frac{1}{2^{3(t-1)}}{2^{t-1}n_0+1\choose 3}\rightarrow \frac{n_0^3}{6}$ as $t\rightarrow \infty$. Let $\hat{a}_t = \frac{a_t}{8^t}$. We then have that $$\hat{a}_t = O(2^{-t})+\begin{cases}
    \hat{a}_{t-1} & \text{if }s(t) = 1,\\
    \frac{\hat{a}_{t-1}}{2} + \frac{n_0^3}{6\cdot 8} & \text{if }s(t) = 0.
\end{cases}$$ 

If $s$ is the constant sequence of 1's, then $\hat{a}_t$ remains constantly $a_0$, so that the proportion of $D_3$'s $\frac{a_t}{{{2^tn_0}\choose3}}$ approaches $$\frac{6a_0}{n_0^3} = \frac{n_0(n_0-1)(n_0-2)a_0}{n_0^3 {n_0 \choose 3}} = \frac{n_0(n_0-1)(n_0-2)}{n_0^3}\mu$$ as $t\rightarrow \infty$. If $s$ has infinite support, then $\hat{a}_t$ approaches the fixed point of the function $x \mapsto \frac{x}{2} + \frac{n_0^3}{48}$, which is $\frac{n_0^3}{24}$. In this case, the proportion of $D_3$'s in $G_t$ approaches $\frac{n_0^3}{24}\cdot\frac{6}{n_0^3} = \frac{1}{4}$. \qed
\end{proof}

If we perform enough $0$-steps, then we arrive at the following \emph{universality} property: any motif type can appear eventually in ILMT tournaments after sufficiently many time-steps. 

\begin{theorem}
Let $G_0$ be a tournament of order $n_0$ and let $n\leq n_0$. For a generating sequence $s$ with infinite support, let $r$ be the smallest integer such that $$\left|\{1 \leq t \leq r : s(t)=0\}\right| = n.$$ We then have that every tournament on $n$ nodes is isomorphic to a subtournament of $\mathrm{ILMT}_{r,s}(G_0)$.
\label{lemma:0steps}
\end{theorem}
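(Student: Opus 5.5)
The plan is to track how the arc between two nodes with distinct ancestors in $G_0$ changes as we move to descendants, and then choose descendants so that every arc has the orientation required by a target tournament. We identify each node of $G_t$ with a pair $(x,S)$, where $x\in V(G_0)$ is its original ancestor and $S\subseteq\{1,\dots,t\}$ is the set of time-steps at which a clone was taken along the path from $x$ to the node. Concretely, the node $(x,S)$ of $G_{t-1}$ has parent copy $(x,S)$ and clone $(x,S\cup\{t\})$ in $G_t$. Distinct pairs give distinct nodes.

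\textbf{Key lemma.} Let $x\neq y$ in $V(G_0)$ with $(x,y)$ an arc of $G_0$. We claim that in $G_t$ the arc between $(x,S)$ and $(y,S')$ is $(x,S)\to(y,S')$ if and only if $|\{t'\in S\cap S' : s(t')=0\}|$ is even.

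We prove this by induction on $t$, using steps 2 and 4 of the definition. If $(\alpha,\beta)$ is an arc of $G_{t-1}$, then the arcs between $\{\alpha,\alpha'\}$ and $\{\beta,\beta'\}$ all point from the $\alpha$-side to the $\beta$-side. The single exception is the pair $\alpha',\beta'$ when $s(t)=0$, where the arc is reversed. So the orientation flips exactly when both nodes are cloned at a $0$-step.

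In particular, steps with $s(t)=1$ are irrelevant, so we may take all our $S$'s inside the set $Z=\{z_1<\dots<z_n\}$ of the first $n$ zero-steps, all of which are at most $r$.

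\textbf{Embedding.} Let $T$ be a tournament on $\{1,\dots,n\}$. Since $n\le n_0$, we may pick distinct $x_1,\dots,x_n\in V(G_0)$. For $i<j$, set $a_{ij}=0$ if the arc between $i,j$ in $T$ has the same direction as the arc between $x_i,x_j$ in $G_0$, and $a_{ij}=1$ otherwise. By the lemma, it suffices to find sets $S_1,\dots,S_n\subseteq Z$ with $|S_i\cap S_j|\equiv a_{ij}\pmod 2$ for all $i<j$. The nodes $(x_i,S_i)$ of $G_r$ then induce a copy of $T$.

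\textbf{Construction of the sets.} We build the $S_j$ greedily, subject to the invariant $S_j\subseteq\{z_1,\dots,z_j\}$ and $z_j\in S_j$. Fix $j$, and decide membership of $z_1,z_2,\dots,z_{j-1}$ in $S_j$ in this order. When we reach $z_i$, the intersection $S_i\cap S_j$ is determined by $S_j\cap\{z_1,\dots,z_i\}$, because $S_i\subseteq\{z_1,\dots,z_i\}$. Since $z_i\in S_i$, toggling whether $z_i\in S_j$ changes the parity of $|S_i\cap S_j|$. We therefore choose this membership to make the parity equal to $a_{ij}$. Later decisions only involve elements $z_k$ with $k>i$, which do not lie in $S_i$, so they do not disturb this parity. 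Finally we put $z_j$ into $S_j$.

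This uses exactly the $n$ zero-steps up to time $r$, which is why $r$ suffices. The main point needing care is the bookkeeping in the key lemma, namely that the arc orientation depends only on the parity of common $0$-steps. Once that is established, the parity construction is routine.
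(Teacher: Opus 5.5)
Your argument is correct and is essentially the paper's: unrolled, your rule for whether $z_i\in S_j$ (for $j>i$) amounts to cloning, at the $i$-th zero-step, the $i$-th chosen node together with exactly those later nodes whose arc to it currently has the wrong orientation, which is the paper's set $A_{i-1}$. Your parity lemma and the invariant $S_k\subseteq\{z_1,\dots,z_k\}$ make rigorous what the paper's sketch leaves implicit. Namely, $1$-steps do not change orientations, disagreements in both directions are corrected, and already-processed nodes are never cloned again, so their corrected arcs stay fixed.
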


\begin{proof}
Let $S_0$ be an $n$-subset of the nodes in $G_0$, and let $G$ be a tournament with $n$ nodes. Let $\phi_0:S_0\to V(G)$ be a bijection. Choose some node $u$ of $G$. Let $u_0 = \phi_0^{-1}(u)$ and $t_1 = \min\{t : s(t)=0\}$. Define $A_0$ to be the set of nodes $v\in S_0$ such that $(u_0,v)$ is an arc in $G_0$, and $(\phi_0(v),u)$ is an arc in $G$. 

Now, define $S_1$ to be the $n$-subset of the nodes of $G_{t_1}$ formed from $S_0$ by replacing the elements of $A_0\cup\{u_0\}$ by their clones in $G_{t_1}$. Let $\phi_1:S_1 \to V(G)$ be the bijection given by $\phi_1(v) = \phi_0(v)$ for $v\in S_0\cap S_1$, and $\phi_1(v') = \phi_0(v)$ for $v\in S_0 \setminus S_1$, where $v'$ is the clone of $v$ in $G_{t_1}$. Now, for every $v\in S_1$, we have that $(\phi_1(u_0'),\phi_1(v)) = (u,\phi_1(v))$ is an arc in $G$. 

We repeat this process of replacing arcs by their reverse, each time choosing a node in $G$ that has not been chosen previously and defining the set $S_i$ and the bijection $\phi_i :S_i \to V(G)$ analogously. After $n$-many 0-steps, we have that for each arc $(x,y)$ in $\mathrm{ILMT}_{r,s}(G_0)$ with $x,y\in S_n$, $(\phi_n(x),\phi_n(y))$ is an arc in $G$. \qed \end{proof}

We have the following corollary.

\begin{corollary}\label{couni}
    If a generating sequence $s$ has infinite support, then every tournament on $n$ nodes is a subtournament of some $\mathrm{ILMT}$ tournament.
\end{corollary}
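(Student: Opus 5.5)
The plan is to reduce the statement to Theorem~\ref{lemma:0steps}. The only obstacle is that this theorem needs $n\le n_0$, while in the corollary $n$ is arbitrary and the base $G_0$ (of order $n_0\ge 1$) is fixed. To get around this, I will use an intermediate time-step of the process as a new base tournament, since these grow in order.

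\textbf{Step 1 (restart property).} Fix a tournament $G$ on $n$ nodes and write $G_t=\mathrm{ILMT}_{t,s}(G_0)$. For $m\ge 0$, let $s_m$ be the shifted sequence $s_m(i)=s(m+i)$ for $i\ge 1$. Directly from the definition of the model, each $G_t$ is obtained from $G_{t-1}$ using only the value $s(t)$. So by induction on $r$,
$$\mathrm{ILMT}_{r,s_m}(G_m)=\mathrm{ILMT}_{m+r,s}(G_0)$$
for all $r\ge 0$. Moreover, $s_m$ differs from $s$ only by dropping finitely many entries, so $s_m$ still has infinite support.

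\textbf{Step 2 (enlarge the base).} The order of $G_m$ is $2^m n_0$. Choose $m$ large enough that $2^m n_0\ge n$; for instance, $m=\lceil \log_2 n\rceil$ suffices since $n_0\ge 1$. Now apply Theorem~\ref{lemma:0steps} with base tournament $G_m$ (of order $2^m n_0\ge n$) and generating sequence $s_m$. Let $r$ be the least integer such that $s_m$ has exactly $n$ zero entries among its first $r$ terms. The theorem then gives that $G$ is isomorphic to a subtournament of $\mathrm{ILMT}_{r,s_m}(G_m)$.

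\textbf{Step 3 (conclude).} By Step~1, $\mathrm{ILMT}_{r,s_m}(G_m)=\mathrm{ILMT}_{m+r,s}(G_0)$, so $G$ embeds in the ILMT tournament $G_{m+r}$. In fact, the same argument shows that the original base $G_0$ and sequence $s$ can be kept, with only the time-step depending on $n$.

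I expect no real obstacle here. The only point needing care is the restart identity in Step~1, which is immediate because the construction of $G_t$ from $G_{t-1}$ is local in time. The substantive work was already done in Theorem~\ref{lemma:0steps}.
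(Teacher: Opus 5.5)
Your proof is correct and follows the paper's route: the paper states the corollary without proof, as a direct consequence of Theorem~\ref{lemma:0steps}. Your restart step treats $G_m$ with $2^m n_0\ge n$ as a new base and uses the shifted sequence $s_m$, which still has infinite support. This makes explicit how to meet the hypothesis $n\le n_0$ while keeping $G_0$ and $s$ fixed, a detail the paper leaves implicit, and it yields the stronger, non-trivial reading of the corollary.
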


\section{Quasirandomness}

Given Corollary~\ref{prop3} and \ref{couni}, an interesting question is to estimate the counts of larger motifs. Although this problem appears challenging, we provide an asymptotically exact count for a broad range of ILMT tournaments. To this end, we turn to the theory of quasirandom tournaments, first studied in \cite{fan}. Such tournaments are deterministic families that share several asymptotic properties with random tournaments. 

Given tournaments $G$ and $H$, we define the \emph{density} of $H$ in $G$ as the following ratio, where $n(H,G)$ is the number of copies of $H$ present in $G$, and $\mathrm{Aut}(H)$ is the automorphism group of $H$:
$$
d_G^*(H) = \frac{|\mathrm{Aut}(H)|n(H,G)}{|V(H)|!\binom{|V(G)|}{|V(H)|}}.
$$

A sequence of tournaments $(G_n :n\geq 1)$ is said to be a \emph{quasirandom} if for all tournaments $H$ we have that 
$$
\lim_{n\to\infty}d_{G_n}^*(H) = 2^{-\binom{|V(H)|}{2}}.
$$
The uniformly random tournament on $n$ nodes labeled $1,2,\ldots ,n$ has arcs $(i,j)$ with $i<j$ chosen with probability $p=1/2$. In a quasirandom sequence, the density of any subtournament $H$ in $G_n$ approaches the density of $H$ in the uniformly random tournament. While calculating the density of every subtournament may appear challenging, it may be unnecessary. A tournament $G$ is \emph{quasirandom-forcing} if any sequence of tournaments $(G_n :n\geq 1)$ such that 
$$
\lim_{n\to\infty}d_{G_n}^*(G) = 2^{-\binom{|V(G)|}{2}},
$$
is quasirandom. Quasirandom-forcing tournaments have been completely characterized in the following result. 

\begin{theorem}[\cite{raz}]\label{thm:quasiforce}
If $G$ is a quasirandom-forcing tournament, then $G$ is either a linear order with at least four nodes or a particular non-transitive tournament with five nodes.
\end{theorem}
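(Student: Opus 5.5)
This is a cited result from \cite{raz}, so the paper quotes it rather than proving it. If I were to prove it, I would argue the contrapositive: for every tournament $G$ on $k$ nodes that is not a linear order on at least four nodes and not the exceptional five-node tournament, I would exhibit a non-quasirandom sequence $(G_n)$ with $d^*_{G_n}(G)\to 2^{-\binom{k}{2}}$.

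\textbf{Framework.} I would work in the limit language of tournamentons. A tournamenton is a measurable $W:[0,1]^2\to[0,1]$ with $W(x,y)+W(y,x)=1$. Its density $t(G,W)$ is defined so that $d^*_{G_n}(G)\to t(G,W)$ along any sequence converging to $W$. Quasirandom sequences are exactly those converging to the constant $\tfrac12$. It therefore suffices to find, for each such $G$, a tournamenton $W\neq \tfrac12$ with $t(G,W)=2^{-\binom{k}{2}}$. I would then realize $W$ by a sequence of finite tournaments, for instance random samples of $W$.

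\textbf{Perturbation step.} I would use perturbations $W_\varepsilon=\tfrac12+\varepsilon U$, where $U$ is a bounded antisymmetric step function. Expanding $t(G,W_\varepsilon)$ as a polynomial in $\varepsilon$, the linear coefficient depends only on the degree function $\int U(x,y)\,dy$. Hence it vanishes when $U$ is balanced, meaning every part has zero net out-flow.

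For balanced $U$, the quadratic coefficient is a quadratic form $Q_G(U)$. Its value is a sum over pairs of arcs of $G$, split according to whether the two arcs share a node and how they are oriented there (path, common in-neighbour, or common out-neighbour). The plan has two parts.
\begin{enumerate}
\item Show that $Q_G$ takes both signs on balanced $U$, or vanishes identically on some nonzero balanced $U$.
\item Conclude non-forcing. If $Q_G$ vanishes on some nonzero balanced $U$, I would handle that case directly, arguing from higher-order terms or a direct construction. If $Q_G$ takes both signs, pick balanced $U_+$ and $U_-$ with $Q_G(U_\pm)\gtrless 0$. Along a continuous path of non-constant tournamentons from $\tfrac12+\varepsilon U_+$ to $\tfrac12+\varepsilon U_-$ (say for small $\varepsilon$), the density crosses $2^{-\binom{k}{2}}$, which gives $t(G,W)=2^{-\binom{k}{2}}$ for some $W\neq\tfrac12$.
\end{enumerate}

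\textbf{Test perturbations.} Natural test functions are the circular (``carousel'') perturbation $U(x,y)=\sin 2\pi(y-x)$ and two-block perturbations. For small tournaments these give the required sign change directly. For example, $D_3$ and $T_3$ have densities determined by the degree sequence, so every regular tournament sequence, such as circular tournaments, has the right density without being quasirandom.

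\textbf{Main obstacle.} The hard step is the general sign analysis of $Q_G$ for every non-transitive $G$ of every order. One has to show that the contributions of directed paths and of common neighbours cannot all conspire to make $Q_G$ positive definite on balanced perturbations, except for the single five-node tournament. I would first try to bound $Q_G$ on the carousel perturbation using the number of directed $2$-paths minus the number of in-stars and out-stars. I would reduce large $G$ to the existence of a suitable induced subconfiguration, leaving a finite set of small tournaments (order at most about $7$) to be checked by explicit, possibly computer-assisted, computation.
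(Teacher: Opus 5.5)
The paper does not prove this statement; it is quoted from \cite{raz}. Your outer framework is sound: the contrapositive, tournamentons, $W$-random sampling, and intermediate values along a path avoiding the constant $\tfrac12$. Your regular-tournament argument for $D_3$ and $T_3$ is also correct.

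The gap is in the perturbative core, which is miscomputed. Write $m=\binom{k}{2}$ and $f(x)=\int U(x,y)\,dy$. The linear coefficient of $t(G,\tfrac12+\varepsilon U)$ is $2^{1-m}m\int\!\!\int U$, which vanishes for \emph{every} antisymmetric $U$, balanced or not. In the $\varepsilon^2$ coefficient, two disjoint arcs contribute $(\int\!\!\int U)^2=0$. An out-star or an in-star contributes $\int f^2$, and a directed $2$-path contributes $-\int f^2$. So the coefficient is
\[
2^{1-m}\Bigl(\sum_{v\in V(G)}(d^+_v-d^-_v)^2-k(k-1)\Bigr)\int f^2 .
\]
The path/star bookkeeping you describe therefore never yields a genuine quadratic form. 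It is the degree function times one number fixed by the score sequence of $G$, and on balanced $U$ your $Q_G$ is identically zero. Step~1 of your plan always lands in the branch you defer to ``higher-order terms or a direct construction'', and that branch is the whole theorem.

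Higher orders do not rescue the search for both signs among balanced perturbations either. For balanced $U$, every configuration with a leaf integrates to zero. The cubic term is then a multiple of $\mathrm{tr}(U^3)$, which is $0$ for any antisymmetric kernel. The first surviving term is $2^{4-m}c(G)\,\mathrm{tr}(U^4)$. Here $c(G)$ is a signed count of the undirected $4$-cycles of $G$, each weighted by $-1$ to the number of its arcs oriented against a fixed traversal. Since $\mathrm{tr}(U^4)=\|U^2\|_2^2>0$, the sign is dictated by $G$, not by $U$.

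What is missing is a global anchor in place of a local sign change. If $G$ is not transitive, then $t(G,\cdot)$ vanishes on the transitive tournamenton. The non-constant tournamentons form a path-connected set on which $t(G,\cdot)$ is continuous. Hence $G$ is non-forcing as soon as \emph{one} non-quasirandom $W$ has $t(G,W)\ge 2^{-m}$. In particular, if the density rises near $\tfrac12$ in any direction, non-forcing already follows. So your exceptional case points the wrong way: for a forcing non-transitive $G$, the constant $\tfrac12$ must be the unique maximiser of $t(G,\cdot)$, not a point where $Q_G$ is positive definite.

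For $k\ge 7$, take the blow-up of $G$: one part per vertex, arcs between parts as in $G$, value $\tfrac12$ inside parts. Its density of $G$ is at least $k^{-k}>2^{-\binom{k}{2}}$. This settles all large orders at once, with no reduction to subconfigurations. It is the range covered by the earlier result of Buci\'c, Long, Shapira and Sudakov, which \cite{raz} extends. What \cite{raz} adds is a case analysis of the non-transitive tournaments of order at most six, each requiring a non-quasirandom tournamenton with at least the random density. The two local quantities above certify this only when one of them is positive; otherwise higher-order or explicit global constructions are needed. Your proposal supplies none of this.
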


Hence, to show that a sequence of tournaments is quasirandom, it is sufficient to consider the density of the linear order on four nodes. We take that approach in the following theorem.

\begin{theorem}\label{thm:0stepquasi}
If $s$ has infinite support, then $(G_t :t \geq 0)$ forms a quasirandom sequence.
\end{theorem}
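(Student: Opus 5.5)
By Theorem~\ref{thm:quasiforce}, it suffices to show that the density of the linear order $T_4$ tends to $2^{-6}$. Since $|\mathrm{Aut}(T_4)|=1$, this is the statement that the proportion of $4$-subsets of $V(G_t)$ inducing $T_4$ tends to $24/64=3/8$. Just as Corollary~\ref{prop3} handled order $3$, I would track the proportions of all four isomorphism types of $4$-node tournaments. Let $x_t\in\mathbb{R}^4$ be the vector of these proportions in $G_t$, and let $\pi$ be the corresponding vector for the uniformly random tournament.

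\textbf{Step 1: reduce to lifts.} A $4$-subset of $V(G_t)$ either contains some node $x$ together with its clone $x'$, or is a lift of a $4$-subset $S$ of $V(G_{t-1})$. Here a lift means that some subset $A\subseteq S$ is replaced by the clones of its elements. There are only $O(n_{t}^3)$ subsets of the first kind, which is $O(2^{-t})$ of all $\binom{n_t}{4}$ subsets, so they are negligible. Each $4$-subset of $G_{t-1}$ has exactly $16$ lifts, and $\binom{n_t}{4}=16\binom{n_{t-1}}{4}(1+O(2^{-t}))$.

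\textbf{Step 2: the one-step map.}
\begin{itemize}
\item If $s(t)=1$, every lift is isomorphic to $S$, so $x_t=x_{t-1}+O(2^{-t})$.
\item If $s(t)=0$, the lift with replaced set $A$ is the tournament on $S$ with the arcs inside $A$ reversed and all other arcs unchanged. The type of the lift therefore depends only on the type of $S$ and the choice of $A$. Averaging over the $16$ choices of $A$ gives a fixed column-stochastic $4\times 4$ matrix $M$ with $x_t=Mx_{t-1}+O(2^{-t})$.
\end{itemize}

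\textbf{Step 3: $\pi$ is fixed by $M$.} For each fixed $A$, reversing the arcs inside $A$ is a bijection on labelled tournaments on $S$. Hence it preserves the uniform distribution, and so $M\pi=\pi$.

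\textbf{Step 4: convergence.} The main obstacle is to show that $M$ contracts towards $\pi$. Concretely, I need $\|M(y-\pi)\|_1\le c\|y-\pi\|_1$ for some $c<1$ and all probability vectors $y$. I would first compute $M$ explicitly by checking, for each of the four types, which types arise from the $16$ reversal sets (types are determined by score sequences on $4$ nodes). I would then verify that $M$, or some small power of it, has all entries positive, which gives a Dobrushin-type contraction coefficient $c<1$. If positivity fails, I would instead compute the eigenvalues of the $4\times4$ matrix directly and check that every eigenvalue other than $1$ has modulus less than $1$.

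\textbf{Step 5: conclude.} Stochastic matrices are $\ell_1$-nonexpansive, so $\|x_t-\pi\|_1$ never grows by more than the $O(2^{-t})$ error on $1$-steps. On each of the infinitely many $0$-steps it is multiplied by $c$, and the accumulated errors are summable with tails tending to $0$. Hence $x_t\to\pi$, in particular the $T_4$ proportion tends to $3/8$, and Theorem~\ref{thm:quasiforce} gives quasirandomness. The same argument applies when $n_0<4$, since the counts become nontrivial after finitely many steps.
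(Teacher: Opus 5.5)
Your proposal is correct and follows the paper's proof. Both arguments reduce to the $T_4$ density via Theorem~\ref{thm:quasiforce}, discard the $O(2^{-t})$ fraction of $4$-sets containing a node and its clone, and show that the averaged $16$-lift transition matrix on the four $4$-node types drives the type proportions to $(3/8,1/8,1/8,3/8)$. The paper writes this matrix out explicitly: your $M$ is the transpose of its row-stochastic $T$, all entries are at least $1/16$, and your Dobrushin contraction applies directly with $c=1/2$. Your version is actually more careful than the paper's, which treats only the all-zeros sequence and cites the Markov chain limit theorem. Your perturbed contraction estimate genuinely handles interspersed $1$-steps and the $O(2^{-t})$ error terms, and your reversal-bijection argument gives $M\pi=\pi$ without computing the stationary vector.
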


\begin{proof}
For simplicity, we consider the sequence consisting of all 0's only. Now, there are four unique tournaments on four nodes: the linear order, the \emph{Winner} tournament, which is the unique non-transitive tournament with a node of out-degree 3, the \emph{Loser} tournament, which is the unique non-transitive tournament with a node of in-degree 3, and the \emph{Mixed} tournament, which is the unique tournament with out-degree sequence $(1,1,2,2)$. 
    
Let $G$ be any tournament, and let $G_t$ be the tournament obtained from $t$ applications of 0-steps to $G$. Let $p_t$ be the proportion of 4-tuples in $G_k$ that are linear orders, let $w_t$ be the proportion of 4-tuples in $G$ that are Winner, let $\ell_t$ be the proportion of 4-tuples in $G_t$ that are Loser, and let $m_t$ be the proportion of 4-tuples in $G_t$ that are Mixed. 

Consider any 4-tuple $X=\{x_1,x_2,x_3,x_4\}$ of nodes in $G_{t}.$ Each $x_i$ is either a node in $G_{t-1}$ or a clone of a node in $G_{t-1}$, and so we can create a \textit{corresponding tournament} $X^*$ that is contained entirely in $G_{t-1}$ by replacing each clone node with its parent node in $G_{t-1}.$ The tournament type of $X$ implies the tournament type of $X^*$ --- for example, if $X=\{x_1',x_2',x_3',x_4'\}$ forms a Winner tournament in $G_t$, then $X^*=\{x_1,x_2,x_3,x_4\}$ forms a Loser tournament in $G_{t-1}.$ Thus, for each of the 16 types of 4-node tournaments in $G_t$, we determine the type of its corresponding tournament in $G_{t-1}$. Each proportion of the different tournament types in $G_t$ can therefore be expressed as a linear combination of the proportion of tournament types in $G_{t-1}$, which we state as the following matrix equation:
    
\begin{equation*}
    \sigma_t =\begin{bmatrix}
            p_t \\
            w_t \\
            \ell_t \\
            m_t
        \end{bmatrix}=
        \begin{bmatrix}
            \frac{11}{16} & \frac{1}{16} & \frac{1}{16} & \frac{3}{16} \\
            \frac{3}{16} & \frac{9}{16} & \frac{1}{16} & \frac{3}{16} \\
            \frac{3}{16} & \frac{1}{16} & \frac{9}{16} & \frac{3}{16} \\
            \frac{3}{16} & \frac{1}{16} & \frac{1}{16} & \frac{11}{16} 
        \end{bmatrix}
        \begin{bmatrix}
            p_{t-1} \\
            w_{t-1} \\
            \ell_{t-1} \\
            m_{t-1}
        \end{bmatrix} = T\sigma_{t-1}
\end{equation*}

We note that $\sigma_t$ only describes the proportion of 4-tournaments $X$ in $G_t$ that do not simultaneously contain a node and its clone. For example, regardless of what type of 4-tournament is formed by the nodes $X = \{x,x',y,z\}$, its corresponding tournament $X^*=\{x,y,z\}$ will not form a 4-tournament. However, the proportion of these tournaments is $\binom{2^t |V(G)|}{3}/\binom{2^t |V(G)|}{4} \sim \frac{4}{2^t|V(G)|},$ which quickly goes to 0 as $t\to\infty$. Therefore, no matter the behavior of these tournaments, they do not meaningfully contribute to the overall proportion of 4-tournament types in $G_t$. Hence, the above matrix equation fully governs the long-term behavior of tournament types in $G_t.$

We note that the matrices $\sigma_t$ form an irreducible aperiodic Markov chain on a finite number of states with transition matrix $T$. It is well known that each such Markov chain has a limiting distribution $\pi$ that is also a \emph{stationary distribution} (see, for example, \cite[Corollary 7.9]{Privault2018}) that is the unique solution to the matrix equation $\pi = T\pi.$ Hence, independent of the initial proportions of each of the types of 4-node tournaments in $G_0$, the proportions $\sigma_t$ must converge to this distribution $\pi$. It is not difficult to show that $\pi^T = [\frac{3}{8}~\frac{1}{8}~\frac{1}{8}~\frac{3}{8}],$ and so as $t$ tends to $\infty$, the proportion 
$$p_t = \frac{n(T_4,G_t)}{\binom{|V(G_t)|}{4}} \to \frac{3}{8}.$$
Hence, we have that
$$ d^*_{G_k}(T_4) = \frac{|\mathrm{Aut}(T_4)|}{|V(T_4)|!}p_k = \frac{1}{24}p_t \to \frac{1}{64} = 2^{-\binom{|V(T_4)|}{2}}.$$
The proof follows from Theorem~\ref{thm:quasiforce}. \qed
\end{proof}

We finish the section by noting that the ILMT process yields, in a certain sense, continuum-many pairwise non-isomorphic sequences of tournaments.  Hence, combined with Theorem~\ref{thm:0stepquasi}, the ILMT model yields continuum-many non-isomorphic sequences of quasirandom tournaments.

\begin{theorem}
Given a tournament $G_0$ and two distinct generating sequences $s$ and $s'$, there is some $t\geq 1$ such that $\mathrm{ILMT}_{t,s}(G_0)$ is not isomorphic to $\mathrm{ILMT}_{t,s'}(G_0)$.
\end{theorem}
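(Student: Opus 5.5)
Let $t_0$ be the least index with $s(t_0)\neq s'(t_0)$. Without loss of generality $s(t_0)=1$ and $s'(t_0)=0$. Since the sequences agree before $t_0$, we have $\mathrm{ILMT}_{t_0-1,s}(G_0)=\mathrm{ILMT}_{t_0-1,s'}(G_0)=:H$, of order $n=2^{t_0-1}n_0$. Let $a=a_{t_0-1}$ be the number of $D_3$'s in $H$. The plan is to separate $G_{t_0}$ from $G'_{t_0}$ by an isomorphism invariant, namely the number of directed $3$-cycles. By Theorem~\ref{ILMTtriples} the two counts are
$$a_{t_0}=8a \qquad\text{and}\qquad a'_{t_0}=4a+\binom{n+1}{3}.$$
These differ unless $4a=\binom{n+1}{3}=\frac{(n+1)n(n-1)}{6}$.

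Next I will invoke the classical bound on the number of cyclic triples. An $n$-node tournament contains at most $\frac{n(n^2-1)}{24}$ copies of $D_3$ when $n$ is odd, and at most $\frac{n(n^2-4)}{24}$ when $n$ is even. This follows by counting transitive triples via $\sum_v\binom{d^+(v)}{2}$ and applying convexity. Hence $4a\le\binom{n+1}{3}$, with equality only if $n$ is odd and $H$ is regular. Since $n=2^{t_0-1}n_0$, equality forces $t_0=1$, $n_0$ odd and $G_0$ regular. In every other case $a_{t_0}<a'_{t_0}$, so $G_{t_0}\not\cong G'_{t_0}$ and we are done.

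The main obstacle is the exceptional case: $t_0=1$ and $G_0$ regular of odd order. There the $D_3$ counts agree at time $1$, and checking out-degrees shows the score sequences of $G_1$ and $G'_1$ also agree. For later times I would use the following observation. Every $G_t$ with $t\ge1$ has even order. So at any later index $t$ where the two $D_3$ counts are still equal but $s(t)\neq s'(t)$, the strict even-order bound forces the counts to differ. The same holds if the counts already differ at time $t-1$ and the bits agree at $t$, because both recurrences are injective in $a_{t-1}$.

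Thus the only genuinely problematic situation is when $s$ and $s'$ differ solely at position $1$. For that case I would compute a finer invariant. The first choice is the number of $4$-node motifs of each type (linear order, Winner, Loser, Mixed), tracked with the transfer matrix from the proof of Theorem~\ref{thm:0stepquasi}. Since $1$-steps preserve the motif proportions asymptotically while $0$-steps apply $T$, I expect these counts to separate the two tournaments at time $1$ or after further steps. Concretely, I would write exact (not asymptotic) recurrences for the $T_4$ count under a $1$-step and under a $0$-step, including the terms from $4$-sets containing a node together with its clone. I would then check that for regular $G_0$ the two values at $t=1$ differ. If that check failed, for instance for $G_0=D_3$, I would test directly whether $G_1\cong G_1'$. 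That test decides whether the statement needs the extra hypothesis that $G_0$ is not regular of odd order.
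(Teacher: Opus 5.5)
\textbf{Comparison with the paper.} Your reduction is correct, and it differs from the paper's. The paper separates $G_{t_0+1}$ from $G'_{t_0+1}$ by counting nodes of out-degree $2^{t_0}n_0$: every clone of a $0$-step has exactly that out-degree, while clones of a $1$-step have odd out-degree. You instead compare $D_3$ counts and use the classical upper bound on cyclic triples. Since the $D_3$ count is determined by the score sequence, your invariant is coarser, but it still suffices. Your treatment of later differing positions via the strict even-order bound is also right.

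\textbf{The gap.} You leave one case open: $G_0$ regular of odd order, with $s,s'$ differing \emph{only} in the first entry. Here $G_t\cong G'_t$ for all $t$ if and only if $G_1\cong G_1'$, so the whole statement rests on this single comparison. Saying ``I expect the $4$-node motif counts to separate them'' is not an argument. For $G_0=D_3$ it does happen to work ($G_1$ has three $T_4$'s and $G_1'$ has none), but you give no general computation. The paper's degree argument fails in exactly this case too, because the score sequences of $G_1$ and $G_1'$ coincide. Its opening line ``if $G_1\not\cong G_1'$ we are done'' does not resolve it.

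\textbf{The missing idea} is structural rather than a count. In this case all parents have out-degree $n_0-1$ and all clones have out-degree $n_0$, in both $G_1$ and $G_1'$. So an isomorphism $\phi\colon G_1\to G_1'$ maps parents to parents and clones to clones. Write $\phi(x)=\alpha(x)$ and $\phi(x')=\beta(x)'$. Then $\alpha$ is an automorphism of $G_0$, and $\beta$ is an isomorphism from $G_0$ onto its reverse.

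The parent--clone arcs are identical in the two tournaments: $p\to q'$ if and only if $(p,q)$ is an arc of $G_0$. Hence $(p,q)$ is an arc if and only if $(\alpha(p),\beta(q))$ is. Since $\alpha$ is an automorphism, $(p,q)$ is an arc if and only if $(\alpha(p),\alpha(q))$ is. Together these say that $\alpha(q)$ and $\beta(q)$ have the same in-neighbourhood in $G_0$. That forces $\alpha(q)=\beta(q)$, because if $u\to w$ then $u$ is an in-neighbour of $w$ but not of $u$. So $\alpha$ is simultaneously an automorphism and an anti-automorphism, which is impossible once $G_0$ has an arc. Hence $G_1\not\cong G_1'$ for $n_0\ge 3$.

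\textbf{A genuine exception.} For $n_0=1$, $G_1=G_1'$ is the single arc $(a',a)$ whatever $s(1)$ is. So sequences differing only in the first entry give identical $G_t$ for all $t$. The statement therefore genuinely needs $n_0\ge 2$. Your suspicion that an extra hypothesis might be required is justified, but only in this degenerate case, not for regular tournaments of odd order in general.
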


\begin{proof} Let $G_t=\mathrm{ILMT}_{t,s}(G_0)$ and let $G_{t_0}=\mathrm{ILMT}_{t,s'}(G_0)$. If $G_1$ is not isomorphic to $G'_1$, then the desired result holds with $t=1$. It suffices to show that if $t_0 > 0$ such that $G_{t_0} = G'_{t_0}$ and if $s(t_0+1) \neq s'(t_0+1)$, then $G_{t_0+1}$ is not isomorphic to $G'_{t_0  + 1}$. 

Assume $t_0 > 0$ such that $G_{t_0} = G'_{t_0}$ and $s(t_0+1) = 1 = s'(t_0+1)+1$. Let $r$ be the number of nodes in $G_{t_0}$  with out-degree $2^{t_0-1}n_0$. A node with out-degree $\delta$ at time $t_0$ must have double the out-degree, $2\delta$, at time $t_0 + 1$. Every clone in $G'_{t_0 + 1}$ has $2^{t_0}n_0 - 1$ out-neighbors other than its parent. Thus, $G_{t_0+1}'$ has $r+2^{t_0}n_0$ nodes with out-degree $2^{t_0}n_0$. However, note that $G_{t_0+1}$ has only $r$ nodes with out-degree $2^{t_0}n_0$, since all clones in a 1-step have odd out-degree. Altogether, we have that the tournaments $G_{t_0+1}$ and $G_{t_0+1}'$ are not isomorphic, since they do not have the same out-degree distribution. That is, the desired result holds for $t=t_0$.\qed
\end{proof}

\section{Graph-theoretical properties}

We next present results on classical graph-theoretical properties of ILMT tournaments. For a tournament $G$, we say that $S\subseteq V(G)$ is \textit{in-dominating} in $G$ if every node in $V(G) \setminus S$ has an in-neighbor in $S$. Out-dominating sets are defined analogously. An in-dominating (respectively, out-dominating) set for $G$ is said to be \textit{minimal} if it does not contain any proper subset that is also in-dominating (respectively, out-dominating) for $G$. The \textit{in-domination number} of $G$, denoted $\gamma^-(G)$, is the least cardinality among in-dominating sets for $G$. The \textit{out-domination number} of $G$, denoted $\gamma^+(G)$, is defined analogously. In the case that $G$ is a tournament, the condition that $\gamma^-(G)=1$ is equivalent to the condition that $G$ has a source. Similarly, when $G$ is a tournament, then $\gamma^+(G) = 1$ if and only if $G$ has a sink.

We provide the following lemma.

\begin{lemma}\label{ILMTindom1}\label{ILMTindom2}
Let $G_0$ be a tournament, let $G_1$ be the tournament obtained from $G_0$ as the result of a 0-step, and for any $S \subseteq V(G_0)$ let $S' = \{x':x\in S\}$. We then have the following.
\begin{enumerate}
    \item $S' = \{x':x\in S\}$ is an in-dominating set for $G_1$ if and only if $S$ is both an in- and out-dominating set for $G_0.$

    \item If $S$ with $|S|>1$ is a minimal in-dominating set for $G_0$, then $S'$ is an in-dominating set for $G_1$.
\end{enumerate}
\end{lemma}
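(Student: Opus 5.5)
Both parts reduce to an explicit description of the neighbourhoods in $G_1$ after a 0-step. From the four construction rules with $s(1)=0$, an arc $(u,v)$ of $G_0$ produces the arcs $(u,v)$, $(u,v')$, $(u',v)$ and $(v',u')$, and each $x$ gets the arc $(x',x)$. For a node $y\in V(G_0)$ this gives
$$N^-_{G_1}(y)=N^-_{G_0}(y)\cup\{y'\}\cup\{u' : u\in N^-_{G_0}(y)\},$$
and for a clone $y'$
$$N^-_{G_1}(y')=N^-_{G_0}(y)\cup\{v' : v\in N^+_{G_0}(y)\}.$$
So, restricted to the clone set, the in-neighbours of $y$ are $y'$ together with the clones of its $G_0$-in-neighbours, and the in-neighbours of $y'$ are the clones of its $G_0$-out-neighbours.

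\emph{Part 1.} Every original node lies outside $S'$, so $S'$ in-dominates all of $V(G_0)$ iff every $y$ has $y\in S$ or an in-neighbour in $S$, i.e.\ iff $S$ is in-dominating in $G_0$. A clone $y'\notin S'$ (equivalently $y\notin S$) is dominated by $S'$ iff $y$ has an out-neighbour in $S$. Requiring this for all $y\notin S$ says exactly that $S$ is out-dominating in $G_0$. The two conditions together give the stated equivalence; this part is routine.

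\emph{Part 2.} By Part 1 it suffices to show that a minimal in-dominating set $S$ with $|S|>1$ is also out-dominating. That is, no $y\notin S$ can satisfy $s\to y$ for all $s\in S$. My first attempt would be a contradiction argument. Suppose such a $y$ exists. Minimality gives each $s\in S$ a private node $z_s$ that is dominated by $s$ and by no other element of $S$. One would then try to show that $S$ is not minimal, or that $S$ could be replaced by a smaller in-dominating set, using $|S|>1$ together with the universal domination of $y$.

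This step is the main obstacle, and in checking it I believe it fails as stated. Take $G_0$ on $\{a,b,z,y\}$ with arcs $a\to b$, $b\to z$, $z\to a$, $a\to y$, $b\to y$, $y\to z$.
\begin{itemize}
\item $G_0$ has no source, so every in-dominating set has at least two nodes.
\item $S=\{a,b\}$ is in-dominating, and neither $\{a\}$ nor $\{b\}$ is (for $\{a\}$, $z$ is undominated; for $\{b\}$, $a$ is). So $S$ is minimal with $|S|=2$.
\item The in-neighbours of $y'$ in $G_1$ are $a$, $b$ and $z'$, none of which lies in $S'=\{a',b'\}$. Hence $S'$ is not in-dominating in $G_1$.
\end{itemize}

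So in writing this up I would first re-verify the neighbourhood computations above, since everything hinges on them. If they are correct, I would strengthen the hypothesis of item 2 to the natural condition that $S$ is also out-dominating, or equivalently that no node outside $S$ is dominated by all of $S$. Under that hypothesis, item 2 follows immediately from item 1.
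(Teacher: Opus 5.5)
Your item 1 is correct and uses the same argument as the paper. A parent $x$ can be dominated from $S'$ only by $x'$ (when $x\in S$) or by some $u'$ with $(u,x)$ an arc of $G_0$. A clone $x'\notin S'$ can be dominated only by some $v'$ with $(x,v)$ an arc of $G_0$. Your explicit in-neighbourhood formulas make both directions of this bookkeeping clear.

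For item 2 the paper gives no proof (``for space considerations, we only prove (1)''), and your conclusion is right: item 2 is false as stated. I checked your example.
\begin{itemize}
\item The six arcs form a tournament with no source, and $\{a,b\}$ is a minimal in-dominating set.
\item In the 0-step, the arcs $(a,y)$, $(b,y)$, $(y,z)$ of $G_0$ yield $(y',a')$, $(y',b')$, $(z',y')$ among the clones, together with $(a,y')$ and $(b,y')$.
\item Hence $N^-_{G_1}(y')=\{a,b,z'\}$, which misses $S'=\{a',b'\}$.
\end{itemize}
By item 1, the general obstruction is exactly a vertex outside $S$ that is beaten by every element of $S$. The hypothesis $|S|>1$ only disposes of the case $S=\{u\}$ with $u$ a source, and does nothing against this. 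With your extra hypothesis that $S$ is also out-dominating, item 2 becomes an immediate special case of item 1.

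The same example refutes the ``if'' direction of Theorem~\ref{ILMTindom3}: $S=\{a,b\}$ is a minimal in-dominating set with $|S|=2$, yet $S'$ is not even in-dominating. Moreover, the paper's proofs of both directions of that theorem invoke item 2, and so does its derivation of Corollary~\ref{cordom}(2); those arguments need repair. Your example does not contradict the corollary itself, however. The set $\{a,z\}$ is both in- and out-dominating in $G_0$, so $\{a',z'\}$ in-dominates $G_1$ and $\gamma^-(G_1)=2=\gamma^-(G_0)$.
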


\begin{proof}
    For space considerations, we only prove (1). If $S'$ is an in-dominating set for $G_1$, then for every $x\in V(G_0)\setminus S$, there exist some $u,v \in S$ such that $(u',x)$ and $(v',x')$ are arcs in $G_1$. We have that $(u,x)$ and $(x,v)$ are both arcs in $G_0$, so $S$ is both in- and out-dominating for $G_0$. 

    Conversely, if $S$ is both in- and out-dominating for $G_0$, then every clone in $V(G_1)\setminus S'$ has an in-neighbor in $S'$ that is an out-neighbor of its parent, and every node in $V(G_0)$ has an in-neighbor in $S$ that is the clone of one of its neighbors in $S$. Moreover, every element of $S'$ is an in-neighbor of its parent in $S$. Thus, $S'$ is an in-dominating set for $G_1$. Item (1) follows.
   \qed
\end{proof}



\begin{theorem}\label{ILMTindom3}
 Let $G_0$ be a tournament and $G_1$ the tournament obtained from $G_0$ as the result of a 0-step. For any $S \subseteq V(G_0)$ with $|S|>1$, we have that $S' = \{x':x\in S\}$ is a minimal in-dominating set for $G_1$ if and only if $S$ is a minimal in-dominating set for $G_0$.
\end{theorem}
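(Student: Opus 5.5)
The plan is to reduce both directions of Theorem~\ref{ILMTindom3} to Lemma~\ref{ILMTindom1}. Part (1) says that $T'$ is in-dominating in $G_1$ exactly when $T$ is both in- and out-dominating in $G_0$, and we apply it to arbitrary subsets $T\subseteq S$, not only to $S$. It helps to record the local structure of a 0-step. The in-neighbours of a clone $x'$ are the $u$ with $(u,x)$ an arc of $G_0$, together with the clones $y'$ with $(x,y)$ an arc of $G_0$. The in-neighbours of a parent $x$ are the $u$ and $u'$ with $(u,x)$ an arc, together with $x'$.

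\emph{($\Leftarrow$)} Suppose $S$ is a minimal in-dominating set for $G_0$ with $|S|>1$. Part (2) of Lemma~\ref{ILMTindom2} gives that $S'$ is in-dominating for $G_1$. For minimality, take any $s\in S$ and put $T=S\setminus\{s\}$, which is nonempty. Since $S$ is minimal, $T$ is not in-dominating in $G_0$. By part (1), $T'=S'\setminus\{s'\}$ is therefore not in-dominating in $G_1$. Hence no single element can be removed from $S'$. Because in-domination is preserved under supersets, this shows that $S'$ is minimal.

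\emph{($\Rightarrow$)} Suppose $S'$ is a minimal in-dominating set for $G_1$. By part (1), $S$ is in- and out-dominating in $G_0$, so in particular $S$ is in-dominating. Now suppose some proper subset of $S$ is in-dominating, and choose one, $T\subsetneq S$, that is minimal in-dominating.
\begin{itemize}
\item If $|T|>1$, part (2) shows that $T'$ is in-dominating in $G_1$. Since $T'$ is a proper subset of $S'$, this contradicts the minimality of $S'$.
\item If $|T|=1$, then $T=\{v\}$ with $v$ a source of $G_0$, and part (2) gives nothing. The natural attempt is to find a proper subset $R\subsetneq S$ that is simultaneously in- and out-dominating, for instance $R=\{v\}\cup R_0$ with $R_0\subseteq S$ minimal out-dominating.
\end{itemize}

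I expect this source case to be the main obstacle, and it appears to be a genuine one. Let $G_0$ be the linear order $v\to a\to b$ (with $v\to b$) and take $S=\{v,b\}$.
\begin{itemize}
\item The set $\{v'\}$ is not in-dominating in $G_1$, because $\{v\}$ is not out-dominating in $G_0$: the node $a$ has no out-neighbour in $\{v\}$.
\item The set $\{b'\}$ is not in-dominating in $G_1$, because $\{b\}$ is not in-dominating in $G_0$: the node $v$ has no in-neighbour in $\{b\}$.
\end{itemize}
So $S'$ is a minimal in-dominating set for $G_1$, yet $S$ is not minimal in $G_0$, since $\{v\}$ already in-dominates. I would therefore first recheck this example carefully. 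If it stands, I would prove the forward direction under the extra hypothesis that $G_0$ has no source, for example when $G_0$ is strong, as holds for large $t$ by Theorem~\ref{ILMTdiam}. Under that hypothesis the case $|T|=1$ cannot occur, and the argument above closes.
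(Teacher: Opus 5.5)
Your counterexample is correct. It identifies the unchecked step in the paper's own forward direction, which follows the same route as yours: the paper applies part (2) of Lemma~\ref{ILMTindom1} to a minimal in-dominating $S_0\subseteq S$ without verifying $|S_0|>1$.

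The gap goes deeper than the source case, though. Part (2) of the lemma, which the paper states without proof, is false. This breaks your ($\Leftarrow$) direction, which uses part (2) just as the paper's converse does, and it breaks your proposed repair. By part (1), $T'$ is in-dominating in $G_1$ only if $T$ is also \emph{out}-dominating in $G_0$, and a minimal in-dominating set need not be.

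For a concrete example, let $G_0$ be the directed 3-cycle $a\to b\to d\to a$ together with a sink $c$. Then $G_0$ has no source, so $T=\{a,b\}$ is a minimal in-dominating set with $|T|=2$. However, $c$ has no out-neighbour in $T$. In a 0-step the in-neighbours of $x'$ among the clones are exactly the clones of the out-neighbours of $x$, so $c'$ has no in-neighbour in $T'$. Hence $T'$ is not in-dominating, which refutes ($\Leftarrow$).

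The same $G_0$ refutes ($\Rightarrow$). Take $S=\{a,b,c\}$. It is both in- and out-dominating in $G_0$, so $S'$ is in-dominating in $G_1$. Moreover $\{a',b'\}$ misses $c'$, $\{a',c'\}$ misses $d$, and $\{b',c'\}$ misses $a$, so $S'$ is minimal. Yet $S\supsetneq T$ is not minimal in $G_0$. So your case $|T|>1$ does not close under a no-source hypothesis. Strong connectivity does not rescue it either: adding a node $e$ with $a\to e$, $c\to e$, $d\to e$ and $e\to b$ makes $G_0$ strong and leaves all of these checks intact.

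What part (1) actually yields is a different characterization, because every subset of $S'$ has the form $R'$ with $R\subseteq S$. Namely, $S'$ is a minimal in-dominating set of $G_1$ if and only if $S$ is minimal among subsets of $V(G_0)$ that are simultaneously in- and out-dominating. Your ($\Leftarrow$) argument becomes valid once you add the hypothesis that $S$ is also out-dominating, using part (1) rather than part (2) to see that $S'$ is in-dominating. What survives of your ($\Rightarrow$) argument is exactly that $S$ is in- and out-dominating and minimal with that property.

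So the statement as written cannot be proved, and the failure is not merely technical. For the four-node $G_0$ above, $\gamma^-(G_0)=2$ but $\gamma^-(G_1)=3$, which contradicts item (2) of Corollary~\ref{cordom}. To see this, any in-dominating set of $G_1$ must contain $c'$ or one of its in-neighbours $a,b,d$. A short check, using the symmetry $a\mapsto b\mapsto d\mapsto a$, shows that no pair of nodes in-dominates $G_1$.
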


\begin{proof}
Suppose that $S'$ is a minimal in-dominating set for $G_1$. By Lemma~\ref{ILMTindom1}, $S$ is an in-dominating set for $G_0$. In particular, $S$ contains a minimal in-dominating set for $G_0$, say $S_0$. By Lemma~\ref{ILMTindom2}, the set $S_0' = \{x':x\in S_0\}$ is an in-dominating set for $G_1$ that is contained in $S'$. Since $S'$ is minimal in $G_1$, we have $S'=S_0'$ and thus, $S = S_0$. Therefore, $S$ is a minimal in-dominating set for $G_0$.

Conversely, suppose that $S$ is a minimal in-dominating set for $G_0$. By Lemma~\ref{ILMTindom2}, $S'$ is an in-dominating set for $G_1$. Let $S_0\subseteq S$ such that $S_0' = \{x':x\in S_0\}$ is a minimal in-dominating set for $G_1$. By Lemma \ref{ILMTindom1}, $S_0$ is in-dominating for $G_0$. However, since $S$ is minimal, we have that $S_0 = S$ and so $S'=S_0'$. Therefore, $S'$ is a minimal in-dominating set for $G_1$. \qed
\end{proof}

We have the following corollary, which shows that the in- and out-domination numbers of ILMT tournaments are either constant across time-steps or eventually constant. 

\begin{corollary}\label{cordom}
Let $G_0$ be a tournament and $G_t = \mathrm{ILMT}_{t,s}(G_0)$. We have the following:
\begin{enumerate}
\item $\gamma^+(G_t) = \gamma^+(G_0)$ for all $t\geq 1$.
\item If $G_0$ has no source, then $\gamma^-(G_t) = \gamma^-(G_0)$ for all $t\geq 1$.
\item If $G_0$ has a source and $s$ has infinite support, then $\gamma^-(G_t)=2$ for sufficiently large $t$.
\end{enumerate}
\end{corollary}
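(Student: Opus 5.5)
We prove each item by comparing $G_{t}$ with $G_{t+1}$ for one step of either type, and then inducting on $t$. The main tool is the projection $\pi$ that sends each clone $x'$ to its parent $x$ and fixes every node of $G_t$. Throughout we use three facts from the construction. First, for $u,x\in V(G_t)$ the arcs between $u$ and $x'$, and between $u'$ and $x$, have the same orientation as the arc between $u$ and $x$. Second, $(x',x)$ is always an arc. Third, the arcs among clones copy $G_t$ in a $1$-step and reverse $G_t$ in a $0$-step.

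For (1), fix a step producing $G_{t+1}$ from $G_t$, of either type.
\begin{itemize}
\item \emph{$\gamma^+(G_{t+1})\le\gamma^+(G_t)$.} Let $S$ be out-dominating for $G_t$; we claim $S$ is out-dominating for $G_{t+1}$. A node $x\notin S$ of $G_t$ has an out-neighbour $v\in S$. A clone $x'$ has $x$ as an out-neighbour if $x\in S$. Otherwise $x'$ has the out-neighbour $v\in S$, since $(x,v)$ is an arc and hence so is $(x',v)$.
\item \emph{$\gamma^+(G_t)\le\gamma^+(G_{t+1})$.} Let $T$ be out-dominating for $G_{t+1}$; we claim $\pi(T)$ is out-dominating for $G_t$, and clearly $|\pi(T)|\le|T|$. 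If $x\notin\pi(T)$, then $x,x'\notin T$. So $x$ has an out-neighbour in $T$ of the form $v$ or $v'$ with $v\neq x$. In either case $(x,v)$ is an arc of $G_t$ with $v\in\pi(T)$.
\end{itemize}
Thus $\gamma^+$ is preserved by every step, and induction on $t$ gives (1).

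For (2), we first check that having no source is inherited. Each original $x$ has the in-neighbour $x'$. Each clone $x'$ has the in-neighbour $u$, where $u$ is any in-neighbour of $x$ in $G_t$; such a $u$ exists because $G_t$ has no source.
\begin{itemize}
\item \emph{$\gamma^-(G_t)\le\gamma^-(G_{t+1})$.} This is the same projection argument as in (1). If $T$ is in-dominating for $G_{t+1}$ and $x\notin\pi(T)$, then $x$ has an in-neighbour in $T$ of the form $u$ or $u'$ with $u\ne x$, because $x'\notin T$. Hence $(u,x)$ is an arc of $G_t$ with $u\in\pi(T)$.
\item \emph{$\gamma^-(G_{t+1})\le\gamma^-(G_t)$ for a $1$-step.} Take $S$ a minimum in-dominating set of $G_t$; we claim $S'$ is in-dominating for $G_{t+1}$. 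An original $x\in S$ is dominated by $x'$. An original $x\notin S$ is dominated by $u'$, where $u\in S$ and $(u,x)$ is an arc. A clone $x'$ with $x\notin S$ is dominated by $u'$, because the clones induce a copy of $G_t$.
\item \emph{$\gamma^-(G_{t+1})\le\gamma^-(G_t)$ for a $0$-step.} Since there is no source, $\gamma^-(G_t)\ge 2$. A minimum in-dominating set $S$ is minimal with $|S|>1$, so Lemma~\ref{ILMTindom2}(2) shows that $S'$ is in-dominating for $G_{t+1}$.
\end{itemize}
Induction on $t$ gives (2).

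For (3), note first that a $1$-step preserves a source $z$, since $z'\to z$ does not occur for sources. More precisely, $z$ acquires the in-neighbour $z'$, but then the clone $z'$ is a source of the clone copy. So the first $0$-step is where the source disappears. This step exists because $s$ has infinite support; let $G_{t+1}$ be the result of it.

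I would first reduce to $|V(G_t)|\ge 2$, which holds as soon as $t\ge1$. If the first $0$-step is $s(1)$ and $n_0=1$, then $G_1$ is just an arc; in that case I would move on to the next $0$-step instead. Let $z$ be the source of $G_t$.
\begin{itemize}
\item \emph{$G_{t+1}$ has no source.} The node $z$ has the in-neighbour $z'$. The clone $z'$ has an in-neighbour $w'$ for any $w\ne z$, since $z\to w$ is reversed among the clones. Any other $u$ or $u'$ has the in-neighbour $z$.
\item \emph{$\{z,w'\}$ is in-dominating for $G_{t+1}$, for any $w\neq z$.} The node $z$ dominates every original $u\ne z$ and every clone $u'$ with $u\ne z$. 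The node $w'$ dominates $z'$.
\end{itemize}
Hence $\gamma^-(G_{t+1})=2$, and since $G_{t+1}$ has no source, item (2) applied from time $t+1$ onward gives $\gamma^-=2$ for all later times.

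The main obstacle I expect is the upper bound in (2) for $0$-steps, where the hypothesis $|S|>1$ in Lemma~\ref{ILMTindom2}(2) must be secured. This is exactly why the no-source hypothesis is needed. A secondary obstacle is handling the small-order base cases in (3).
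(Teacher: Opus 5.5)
\textbf{The gap: Lemma~\ref{ILMTindom2}(2) is false.} Your proof of (1) is the paper's argument. Your projection inequality $\gamma^-(G_t)\le\gamma^-(G_{t+1})$ is valid for both step types, and it is more direct than the paper's appeal to Theorem~\ref{ILMTindom3}. The step that fails is the $0$-step upper bound in (2), where you cite Lemma~\ref{ILMTindom2}(2). The paper states that part without proof.

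By part (1) of the lemma, part (2) amounts to saying that every minimal in-dominating set $S$ with $|S|>1$ is also out-dominating. This never holds when $G_0$ has a sink $c$. Every out-dominating set must contain $c$. A minimal in-dominating set with $|S|>1$ cannot contain $c$: every other node is an in-neighbour of $c$ and $c$ has no out-neighbours, so $c$ could be deleted.

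Concretely, let $G_0$ be the directed triangle $a\to b\to d\to a$ together with a sink $c$. It has no source, and $\{a,b\}$ is a minimal in-dominating set, so $\gamma^-(G_0)=2$. In a $0$-step the in-neighbours of a clone $x'$ are $N^-(x)\cup\{v':v\in N^+(x)\}$. Hence $N^-(c')=\{a,b,d\}$ in $G_1$, and $\{a',b'\}$ does not in-dominate $G_1$.

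\textbf{Items (2) and (3) are false, so the step cannot be repaired.} Write $N^-[x]=N^-(x)\cup\{x\}$ for closed in-neighbourhoods in $G_1$. Then
$N^-[a]=\{a,a',d,d'\}$, $N^-[b]=\{a,a',b,b'\}$, $N^-[d]=\{b,b',d,d'\}$,
$N^-[a']=\{a',b',c',d\}$, $N^-[d']=\{a',b,c',d'\}$, and $N^-[c']=\{a,b,c',d\}$.

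A $2$-set $T$ that in-dominates $G_1$ must meet $N^-[c']$. So $T$ contains $c'$, or, after applying the rotation $a\mapsto b\mapsto d\mapsto a$ (which induces an automorphism of $G_1$), it contains $a$.
\begin{itemize}
\item If $c'\in T$, the other node must lie in $N^-[a]\cap N^-[b]\cap N^-[d]=\emptyset$.
\item If $a\in T$, the other node must lie in $N^-[d]\cap N^-[a']\cap N^-[d']=\emptyset$.
\end{itemize}
Thus $\gamma^-(G_1)=3\neq\gamma^-(G_0)$; the set $\{a',b',c'\}$ in-dominates $G_1$. This contradicts item (2).

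For item (3), take $G_0$ to be the single arc $a\to b$ and $s\equiv 0$. Then $G_1$ is exactly the tournament above: the triangle $a\to b'\to a'\to a$ together with the sink $b$. So $\gamma^-(G_2)=3$, and your own projection inequality gives $\gamma^-(G_t)\ge 3$ for all $t\ge 2$.

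The paper's proof breaks at the same point, since Theorem~\ref{ILMTindom3} rests on Lemma~\ref{ILMTindom2}(2). What survives is item (1) and the fact that $1$-steps preserve $\gamma^-$.
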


\begin{proof}
    For $t\geq 1$, if $S$ is an out-dominating set for $G_{t-1}$, then $S$ is an out-dominating set for $G_{t}$. Each clone has its parent as an out-neighbor and, if $v\in V(G_{t-1})-S$, then $v$ has an out-neighbor in $S$ and so $v'$ does as well. If $S$ is an out-dominating set for $G_{t}$, then the set obtained from $S$ by replacing each clone with its parents is an out-dominating set for $G_{t-1}$. Altogether, we have $\gamma^+(G_{t-1})=\gamma^+(G_{t})$ for all $t\geq 1$. Item (1) follows.

    A similar argument shows that $\gamma^-(G_{t-1})=\gamma^-(G_{t})$ when $s(t)=1$. If $G_0$ has no source, then $G_t$ also has no source for any $t \geq 1$. By Theorem \ref{ILMTindom3}, we have that $\gamma^-(G_{t-1})=\gamma^-(G_{t})$, even when $s(t)= 0$. Item (2) follows.

    For item (3), note that if $u$ is a source in $G_{t-1}$ and $s(t)=1$, then $u'$ is a source in $G_t$. For some $t_0 \geq 0$, suppose that $s(t_0+1) = 0$ and $G_{t_0}$ has a source $u$. We then have that $G_{t_0+1}$ has no source because every parent has its clone as an in-neighbor, every node other than $u$ and $u'$ has $u$ as its in-neighbor, and $u'$ is a sink among the clones. Since $\{u,u'\}$ is in-dominating for $G_{t_0+1}$, we then have that $\gamma^-(G_{t_0+1}) = 2$. Applying item (2) to the sequence $s(t+ t_0+1)$ and base graph $G_{t_0 + 1}$, we have that $\gamma^-(G_t)=\gamma^-(G_{t_0+1})=2$ for all $t>t_0$.\qed
\end{proof}

We next turn to pursuit-evasion games. \emph{Cops and Robbers} is a game played on a tournament $G.$ There are two
players, consisting of a set of cops and a single robber. The game is played over a sequence
of discrete time-steps or rounds indexed by nonnegative integers, with the cops going first
in round 0. The cops and robber occupy nodes. When
a player is ready to move in a round, they must move along an arc to a neighboring node. Players can
pass or remain on their own node. Any subset of cops may move in a given round.
The cops win the game if, after a finite number of rounds, one of them can occupy the
same node as the robber. This situation is called a \emph{capture}. The robber wins if they can
evade capture indefinitely. 

The minimum number of cops required to win is a well-defined
positive integer, called the \emph{cop number} of $G,$ written $c(G).$ For additional background on the
cop number of a graph, see the book \cite{cops}. 

The cop number in ILMT tournaments does not change with a 1-step, as the following theorem shows (with proof omitted). 

\begin{theorem}
If $G_0$ is a tournament and $G_1$ results from a 1-step, then $c(G_0) = c(G_1).$
\end{theorem}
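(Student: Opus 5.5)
We prove the two inequalities $c(G_1)\le c(G_0)$ and $c(G_0)\le c(G_1)$ by strategy translation along the natural retraction $f\colon V(G_1)\to V(G_0)$ with $f(x)=f(x')=x$. First we record the key structural fact for a 1-step. For distinct $u,v\in V(G_0)$, every one of the four pairs $\{u,v\},\{u,v'\},\{u',v\},\{u',v'\}$ spans an arc oriented exactly as $(u,v)$ is in $G_0$. The only arcs between nodes with the same image are the arcs $(x',x)$. Hence $f$ maps arcs to arcs or to loops, so it is a homomorphism onto $G_0$ once loops are allowed. Equivalently, $G_1$ is the lexicographic product of $G_0$ with the 2-node tournament $(x',x)$. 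In addition, the inclusion $G_0\hookrightarrow G_1$ is an induced embedding, and $f$ restricted to $V(G_0)$ is the identity.

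For $c(G_1)\le c(G_0)$, let $k=c(G_0)$ and fix a winning strategy for $k$ cops on $G_0$. In $G_1$, the cops stay on $V(G_0)$ and play against the \emph{shadow} $f(R)$ of the robber $R$. Any robber move in $G_1$, including a pass or a move along $(x',x)$, projects to a legal move or pass of the shadow in $G_0$. So after finitely many rounds some cop $C$ occupies $f(R)$. At that moment the robber is either on $x=f(R)$, and is captured, or on $x'$. In the second case, $C$ sits on $x$ and the robber on $x'$, but the arc is $(x',x)$, so $C$ cannot step onto $x'$ directly. This is the step I expect to be the main obstacle.

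To handle it, I would keep the cop glued to the shadow. Whenever the robber moves from $x'$ to some $y$ or $y'$, it uses an arc $(x,y)$ of $G_0$, and the cop moves along that same arc to $y=f(R)$. If the robber ever sits on an unprimed node after its move, the cop captures it. So the robber must stay among clones forever, moving only along arcs among clones or passing. A second cop is not available in general, so I would use the arc structure instead: whenever the robber is on $x'$ and the cop on some $w$ with $(w,x)$ an arc of $G_0$, the cop moves to $x'$ and captures, since $(w,x')$ is an arc. The plan is therefore to modify the shadow strategy so that the cop arrives \emph{at an in-neighbor} of $f(R)$ rather than on $f(R)$ itself. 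In the last move of the $G_0$ strategy, the capturing cop moves from some $w$ to $f(R)$ along an arc $(w,f(R))$, or it was already on $f(R)$ when the shadow moved there. In the first case, moving to the clone of the target is legal, because $(w,x')$ is an arc, and this captures the robber on $x'$ directly. In the case where the shadow walks onto the cop, the robber has moved from $y$ or $y'$ to $x$ or $x'$. The robber cannot use the arc $(x',x)$ in a way that keeps its shadow fixed and escapes, because the cop on $x$ captures a robber who ends on $x$. So the robber must have moved into $x'$ from some $y^{(\prime)}$ with $(y,x)$ an arc. I would then check that the cop, having played the $G_0$ strategy, can mirror the robber onto the clone level by choosing, at each step, whether to sit on $v$ or $v'$ to match the robber's level, which is possible since arcs between distinct images are oriented uniformly. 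So cops on $G_1$ play the $G_0$ strategy, with each cop allowed to choose either copy of its target node. The capturing move then lands on the robber's exact copy, and a pass by a cop can be implemented by staying put.

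For $c(G_0)\le c(G_1)$, let $k=c(G_1)$. The cops on $G_0$ play the images under $f$ of a winning strategy on $G_1$, while imagining that the robber in $G_1$ sits on the same node of $V(G_0)$ as the real robber. Robber moves in $G_0$ are legal moves in $G_1$, and cop moves in $G_1$ project under $f$ to moves or passes in $G_0$. Capture in $G_1$ means some cop's image equals the robber's node, which gives capture in $G_0$.
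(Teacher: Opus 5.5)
Your proof of $c(G_0)\le c(G_1)$ is correct: $f$ retracts $G_1$ onto the induced copy of $G_0$ and turns arcs $(x',x)$ into passes, so the projected strategy wins. The gap is in $c(G_1)\le c(G_0)$, at exactly the configuration you flag: a cop on $x$ and the robber on $x'$. Your remedy is to let each cop choose $v$ or $v'$ to match the robber's level. This cannot always be carried out, and it would not help anyway.

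The robber moves after the cops, so it can change level. Suppose a cop sits on $x$ because the robber is on an unprimed $y$ with $(y,x)$ an arc of $G_0$. The robber then moves along $(y,x')$. The shadow has now walked onto the cop, so the $G_0$ strategy is exhausted. Yet in $G_1$ the robber is on $x'$, and the only arc between the two is $(x',x)$. The robber can pass forever, and your argument gives no way to finish with $c(G_0)$ cops.

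Level-matching is also not always legal. When the $G_0$ strategy tells a cop to pass, the cop cannot go from $v$ to $v'$; the uniform orientation between distinct images only helps for genuine moves. Finally, your plan of making the cop arrive at an in-neighbor of $f(R)$ is never actually implemented.

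The missing idea is that the clone level is always the right one. In a 1-step the clones induce a copy of $G_0$, and $v'$ beats its parent $v$ while otherwise having exactly the same in- and out-neighbors as $v$. So let each cop occupy $c'$ whenever its position in the $G_0$ strategy, played against $f(R)$, is $c$. These moves are legal because an arc $(v,w)$ yields the arc $(v',w')$. Capture then follows in both possible cases:
\begin{itemize}
\item If the $G_0$ strategy captures by a cop moving from $w$ onto $x=f(R)$, the cop on $w'$ moves to whichever of $x,x'$ holds the robber, since $(w',x)$ and $(w',x')$ are both arcs.
\item If instead the shadow lands on a cop, that cop is on $x'$; this includes the initial placement. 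The robber is then either on $x'$, and already caught, or on $x$, and caught next round along $(x',x)$.
\end{itemize}
The bad configuration therefore never occurs, which gives $c(G_1)\le c(G_0)$.
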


As the domination number upper bounds the cop number, Corollary~\ref{cordom} gives an upper bound of 2 to the cop number of ILMT tournaments whose generating sequence has infinite support. We show that, even after a single 0-step, the cop number is at most 3.

\begin{theorem}\label{0cop}
If $G_0$ is a tournament and $G_1$ results from a 0-step, then we have that
$c(G_1) \leq 3.$
\end{theorem}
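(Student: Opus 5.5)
We aim to show that three cops can win on $G_1$ by placing them, if possible, on an in-dominating set of $G_1$ of size at most $3$. Then every node the robber might choose is an out-neighbor of some cop, so the robber is captured on the first cop move. If no such set can be found, we fall back on a short pursuit strategy. The key is to read off, from the definition of a 0-step, which nodes each cop covers. For $x\in V(G_0)$:
\begin{itemize}
\item $x$ has out-neighbors $N^+(x)\cup\{v':v\in N^+(x)\}$;
\item $x'$ has out-neighbors $\{x\}\cup N^+(x)\cup\{u':u\in N^-(x)\}$, because the arcs among clones are reversed.
\end{itemize}
Hence the pair $\{x,x'\}$ covers all of $V(G_1)$ except the set $A=N^-(x)\subseteq V(G_0)$ of parents that beat $x$ in $G_0$.

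The first step is therefore to station two cops on $x$ and $x'$ for a well-chosen $x$. While these two cops stay put, the robber is confined to $A$. If the robber ever moves to, or starts on, a node outside $A$, one of the two cops captures on the next move.

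The second step is to handle $A$ with the third cop, using a clone. For $y\in V(G_0)$, the clone $y'$ beats $y$ and every out-neighbor of $y$ in $G_0$. So if $A$, viewed as a subtournament of $G_0$, has a source $y$, then $\{x,x',y'\}$ is in-dominating in $G_1$ and we are done. I would therefore first try to choose $x$ so that $N^-(x)$ has a source, or more generally so that $N^-(x)\subseteq N^+(y)\cup\{y\}$ for some $y$. Natural candidates are:
\begin{itemize}
\item a node $x$ of minimum positive in-degree;
\item a node $x\in N^+(y)$, where $y$ is a king of $G_0$, chosen to minimize $|N^-(x)\setminus N^+[y]|$.
\end{itemize}
Trivial cases are dealt with separately. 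If $G_0$ has a sink $x$, then $N^+(x)=\emptyset$, so we should pick another $x$. If $A=\emptyset$ for some $x$, two cops already suffice.

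The main obstacle is this selection step: a general tournament need not contain a node whose in-neighborhood has a source. If no good choice exists, my fallback is a genuine chase rather than a domination argument. Keep $x,x'$ fixed, so the robber must remain in $A$. Let the third cop move through clones $z'$ with $z\in A$, using two facts:
\begin{itemize}
\item $z'$ beats $z$ and $N^+(z)\cap A$;
\item every parent in $A$ has its own clone as an in-neighbor, and clones beat their parents' out-neighbors.
\end{itemize}
The aim is to show that the set of safe positions for the robber in $A$ strictly shrinks with each round. It may also help to allow the cop on $x'$ to relocate to another clone when the robber's position permits, so that the guarded region changes to our advantage. I expect this shrinking argument to need the most care. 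I would check it first on small tournaments $G_0$ with no source before writing the general argument.
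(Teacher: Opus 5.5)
Your local computations are correct. Cops on $x$ and $x'$ cover every node except the parents in $A=N^-(x)$, and a clone $z'$ beats $z$ and every parent in $N^+(z)$. But the proposal stops before the decisive step, so as written it is not a proof.

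Your primary route, a fixed in-dominating triple $\{x,x',y'\}$, cannot work in general. The in-neighbors of a parent $v$ in $G_1$ are $v'$ together with the parents and clones of $N^-(v)$. So replacing each clone in an in-dominating set of $G_1$ by its parent gives an in-dominating set of $G_0$. Hence $\gamma^-(G_1)\ge\gamma^-(G_0)$, which is unbounded over all tournaments. Concretely, in the Paley tournament on $7$ nodes every $N^-(x)$ is a directed $3$-cycle, and none lies in any $N^+(y)\cup\{y\}$. Your fallback, that the robber's safe region \emph{strictly shrinks} each round, is only announced, with no argument behind it. The careful choice of $x$ (minimum in-degree, kings) turns out to be unnecessary.

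The missing idea is that the third cop should respond to the robber's actual position rather than be placed in advance. Done this way, the game ends after two cop moves, with no shrinking argument needed.
\begin{itemize}
\item Put one cop on $x$ and two on $x'$, for an arbitrary $x$. Because arcs among clones are reversed in a 0-step, $x'$ beats $v'$ for every $v\in N^-(x)$.
\item If the robber starts outside $A$, it is caught immediately.
\item If it starts on $v\in A$, move one cop from $x'$ to $v'$.
\item The robber cannot now stay on $v$, since $v'\to v$.
\item It cannot move to $x$, to $x'$, to any clone, or to any parent outside $A$: each such node is occupied by or beaten by the cops on $x$ and $x'$.
\item Every parent $w\in N^+(v)\cap A$ is beaten by $v'$, because the arc $(v,w)$ of $G_0$ yields $(v',w)$ in $G_1$.
\end{itemize}
So capture occurs on the next cop move. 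This is exactly the paper's proof. Your second bulleted fact already contains the key ingredient, but it must be applied with $z$ equal to the robber's current node. You also need the observation that $x'$ reaches every such $z'$ in a single move.
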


\begin{proof}
Let $u$ and $u'$ be a parent node and its clone in $G_1$. 
Suppose that we begin by placing the two cops on $u'$ and the third cop on $u$. 
If the robber begins on a clone node $v'$, then either there is an arc $(u,v)$ in $G_0$, meaning there is an arc $(u,v')$ that the cop on $u$ can use to capture the robber, or there is an arc $(v,u)$ in $G_0$, meaning there is an arc $(u',v')$ that the cop on $u'$ can use to capture the robber.

We may then suppose that the robber begins on a parent node, say $v$. 
We then have that either there is an arc $(u,v)$ in $G_0$, meaning the cop on $u$ can move along this arc to capture the robber, or there is an arc $(v,u)$ in $G_0$, meaning there is an arc $(u',v')$ that one of the two cops on $u'$ can use to move to the node $v'$. 

Since this is the only case in which the robber is not immediately captured, we assume this last case occurs. 
Now, if the robber does not move, the cop on $v'$ can move along the arc $(v',v)$ to capture the robber in the next round. If the robber moves along an arc $(v,w)$, then $w$ must be a parent node to avoid capture by one of the robbers remaining on $u$ and $u'$. However, there is an arc $(v',w)$, and thus, the cop on $v'$ can move along this arc to capture the robber. \qed 
\end{proof}

The following elementary lemma is used to show that the cop number after a 0-step is always at least two. 

\begin{lemma}    [\cite{Halliday_thesis}] If $G$ is a tournament with $c(G)=1$, then $G$ contains a source.  
\end{lemma}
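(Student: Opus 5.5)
We argue by contraposition: assuming $G$ has no source, we describe a strategy by which the robber evades a single cop forever. The key observation is that, with no source, every node $x$ of $G$ has at least one in-neighbor; equivalently, for every node $x$ there is a node $y$ with $(y,x)$ an arc. Since $G$ is a tournament, any two distinct nodes are joined by exactly one arc, so the robber is safe at node $r$ against a cop at node $c\neq r$ precisely when $(c,r)$ is \emph{not} an arc, that is, when $(r,c)$ is an arc.

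The strategy runs as follows. In round 0 the cop chooses a starting node $c_0$. Because $c_0$ is not a source, it has an in-neighbor $r_0$, and the robber starts at $r_0$, so that $(r_0,c_0)$ is an arc. The cop cannot reach $r_0$ in one move, since the only arc between them points toward the cop. We maintain the invariant that after each robber move the robber occupies a node $r$ with $(r,c)$ an arc, where $c$ is the cop's current node. In the cop's turn the cop either passes, which preserves the invariant trivially, or moves along an arc $(c,c_1)$ with $c_1\neq r$; the move $c_1=r$ is impossible because $(c,r)$ is not an arc. The robber now responds by moving to some in-neighbor $r_1$ of $c_1$, which exists because $G$ has no source.

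The step that needs care is checking that this response is always a legal move. If $r_1=r$, the robber simply stays. Otherwise the robber must reach $r_1$ along a single arc from $r$, and that need not be possible in general. The easy fix is to note that the robber does not need an arbitrary in-neighbor of $c_1$. The arc between $r$ and $c_1$ has one of two orientations. If $(r,c_1)$ is an arc, the robber stays at $r$, which is already an in-neighbor of $c_1$. If instead $(c_1,r)$ is an arc, the robber is threatened, and moves along the arc $(r,c)$ to $c$, the cop's previous node; this is an in-neighbor of $c_1$ because the cop just used the arc $(c,c_1)$. In both cases the invariant is restored, and the robber never shares a node with the cop.

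Hence one cop never captures the robber, so $c(G)\geq 2$. This contradicts $c(G)=1$, and therefore $G$ has a source. Only the round-0 placement actually uses the no-source hypothesis; the rest of the argument works in any tournament.
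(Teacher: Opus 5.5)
Your proof is correct and complete. Using the no-source hypothesis, the robber starts on an in-neighbor of the cop. That invariant is then preserved: whenever the cop's move from $c$ to $c_1$ threatens the robber, the robber steps along $(r,c)$ onto the vacated node $c$, which is an in-neighbor of $c_1$, and $c\neq c_1$ because the cop actually moved.

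The paper gives no proof of this lemma; it is quoted from Halliday's thesis and used only to conclude $c(G_1)\geq 2$ after a 0-step. So there is no argument in the paper to compare yours against, and your shadowing strategy serves as a self-contained proof.

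Your closing remark is also accurate: only the round-0 placement uses the hypothesis. Combined with the trivial converse (a single cop on a source reaches any robber starting node in one move), it shows that a tournament has $c(G)=1$ exactly when it has a source.
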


As shown in the proof of Corollary \ref{cordom}, an ILMT tournament resulting from a 0-step does not contain a source. 

\begin{theorem}
    If $G_0$ is a tournament of order at least two and $G_1$ results from a 0-step, then $c(G_1) \geq 2$. 
\end{theorem}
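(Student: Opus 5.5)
We argue by contraposition using the lemma of \cite{Halliday_thesis}: if $c(G_1)=1$, then $G_1$ has a source. It therefore suffices to show that a tournament $G_1$ obtained from $G_0$ by a 0-step has no source whenever $G_0$ has at least two nodes. The proof of Corollary~\ref{cordom} already sketches this, but only under the assumption that $G_0$ has a source. So we give a self-contained argument covering every $G_0$.

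First, every parent node $x\in V(G_0)$ has its clone $x'$ as an in-neighbor, by step 3 of the construction, so no parent is a source. Next, consider a clone $x'$. Since $G_0$ has order at least two, $x$ has some neighbor $y$ in $G_0$, and we split into cases according to the direction of the arc between them. If $(y,x)$ is an arc of $G_0$, then step 2 adds the arc $(y,x')$, so $y$ is an in-neighbor of $x'$. If $(x,y)$ is an arc of $G_0$, then because $s(1)=0$, step 4 adds the arc $(y',x')$, so $y'$ is an in-neighbor of $x'$.

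Either way, $x'$ has positive in-degree. Hence $G_1$ has no source, and the lemma gives $c(G_1)\neq 1$, that is, $c(G_1)\geq 2$.

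There is no real obstacle here. The only point needing care is that the order-two hypothesis is exactly what guarantees that each clone has a neighbor to exploit. It also explains why the hypothesis is needed: if $G_0$ were a single node $x$, the 0-step would produce only the arc $(x',x)$, and $x'$ would be a source.
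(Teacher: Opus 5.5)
Your proof is correct and follows the same route as the paper: combine Halliday's lemma (a tournament with cop number 1 has a source) with the fact that a 0-step never produces a source. The paper only cites the proof of Corollary~\ref{cordom} for the no-source fact, where it is spread over the cases with and without a source. Your direct check that each clone $x'$ gets an in-neighbor ($y$ or $y'$) from any neighbor $y$ of $x$ covers all base tournaments at once and shows exactly where the order-two hypothesis is used.
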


We finish the section by considering colorings of ILMT tournaments. A $k$-\emph{coloring} of a tournament $G$ is a function $\varphi:V(G) \to \{1,2,\dots,k\}$ such that the subtournament induced by the set $\{v\in V(G) : \varphi(v)=i\}$ is a linear order, for $1\leq i \leq k$. The \emph{chromatic number}, written $\chi(G)$, of a tournament $G$ is the minimum $k$ such that $G$ admits a $k$-coloring. 

For a 1-step, the chromatic number remains unchanged.

\begin{theorem}
 If $G_0$ is a tournament and $G_1$ results from a 1-step, then $\chi(G_1) = \chi(G_0)$. 
\end{theorem}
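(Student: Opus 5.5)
We establish the two inequalities separately; the lower bound is immediate and the upper bound rests on one structural observation about 1-steps.

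\emph{Lower bound.} In a 1-step the clones induce a copy of $G_0$: the map $x\mapsto x'$ is an isomorphism from $G_0$ onto the subtournament of $G_1$ induced by the clones, since the arc $(u,v)$ gives the arc $(u',v')$. Equally, the parents induce $G_0$ itself. The chromatic number is monotone under taking subtournaments, because a coloring of $G_1$ restricts to a coloring of any induced subtournament: a subtournament of a linear order is a linear order. Hence $\chi(G_0)\le \chi(G_1)$.

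\emph{Upper bound.} Let $\varphi$ be a $k$-coloring of $G_0$ with $k=\chi(G_0)$. We color $G_1$ by giving both $x$ and $x'$ the color $\varphi(x)$. It remains to check that each color class $C'=\{x,x' : x\in C\}$ is a linear order, where $C$ is a color class of $\varphi$. The class $C$ is transitive, so we may list it as $x_1,\dots,x_m$ with $(x_i,x_j)$ an arc whenever $i<j$.

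We claim that $C'$ is transitive with the ordering
$$x_1',x_1,x_2',x_2,\dots,x_m',x_m.$$
To verify this, we check every pair. The arc $(x_i',x_i)$ is present by construction. For $i<j$, the arc $(x_i,x_j)$ lies in $G_0$. The arcs $(x_i,x_j')$ and $(x_i',x_j)$ come from step 2 of the construction. The arc $(x_i',x_j')$ holds because $s(1)=1$. So every arc of $C'$ points forward in the listed order, and $C'$ is a linear order. This gives a $k$-coloring of $G_1$, so $\chi(G_1)\le \chi(G_0)$.

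The only point needing care is the verification that the doubled color class is still transitive. The key is that in a 1-step all four arc types between $\{x_i,x_i'\}$ and $\{x_j,x_j'\}$ agree with the direction of $(x_i,x_j)$, so the class can be ordered as above. This is precisely where the argument fails for a 0-step, since the clone–clone arcs are reversed there.
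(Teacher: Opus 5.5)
Your proof is correct and is essentially the paper's argument: the paper also gives each clone its parent's color, orders each doubled class as $v_1',v_1,v_2',v_2,\dots,v_j',v_j$ to get a linear order, and uses that $G_0$ is a subtournament of $G_1$ for the reverse inequality. Your pair-by-pair check of the four arc types simply makes explicit what the paper leaves to the reader.
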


\begin{proof} Let $v_1,v_2,\dots,v_j$ be a color class in a $\chi(G_0)$-coloring of $G_0$, labeled so that $v_{\ell+1},\dots,v_j$ are out-neighbors of $v_{\ell}$, for $1\leq \ell <j$. Let $v'_1,v'_2,\dots,v'_j$ be the clones of $v_1,v_2,\dots,v_j$, respectively. We then have that $v'_1,v_1,v'_2,v_2,\dots,v'_j,v_j$ forms a linear order in $G_1$. 

Repeating this for each color class, we obtain a $\chi(G_0)$-coloring of $G_1$. Hence, $\chi(G_1) \leq \chi(G_0)$. Since $G_0$ is a subtournament of $G_1$, we also have that $\chi(G_1) \geq \chi(G_0)$. \qed
\end{proof}

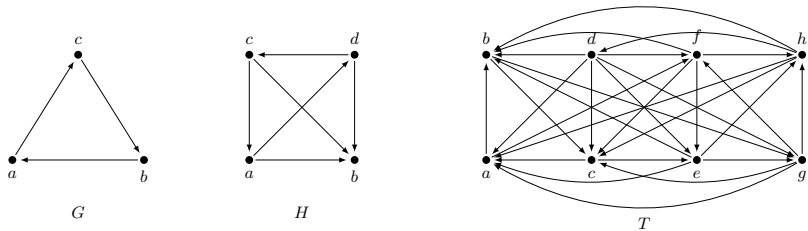
\begin{figure}
    \centering
    \scalebox{0.7}{
    \begin{tikzpicture}[
    vertex/.style={circle, fill=black, inner sep=1.5pt},
    edge/.style={->, >={latex[length=2.5mm, width=1.5mm]}, shorten >=2pt, shorten <=1pt}
    ]

    \begin{scope}
        \node[vertex, label=below:$a$] (a) at (0,0) {};
        \node[vertex, label=below:$b$] (b) at (2.5,0) {};
        \node[vertex, label=above:$c$] (c) at (1.25,2) {};

        \node at (1.25,-1) {$G$};

        \path[edge] (a) edge (c);
        \path[edge] (b) edge (a);
        \path[edge] (c) edge (b);
    \end{scope}

    \begin{scope}[xshift=4.5cm]
        \node[vertex, label=below:$a$] (a) at (0,0) {};
        \node[vertex, label=below:$b$] (b) at (2,0) {};
        \node[vertex, label=above:$c$] (c) at (0,2) {};
        \node[vertex, label=above:$d$] (d) at (2,2) {};

        \node at (1,-1) {$H$};

        \path[edge] (c) edge (a);
        \path[edge] (a) edge (b);
        \path[edge] (d) edge (b);
        \path[edge] (d) edge (c);
        \path[edge] (c) edge (b);
        \path[edge] (a) edge (d);
    \end{scope}

    \begin{scope}[xshift = 9cm]
        \node[vertex, label=below:$a$] (a) at (0,0) {};
        \node[vertex, label=above:$b$] (b) at (0,2) {};
        \node[vertex, label=below:$c$] (c) at (2,0) {};
        \node[vertex, label=above:$d$] (d) at (2,2) {};
        \node[vertex, label=below:$e$] (e) at (4,0) {};
        \node[vertex, label=above:$f$] (f) at (4,2) {};
        \node[vertex, label=below:$g$] (g) at (6,0) {};
        \node[vertex, label=above:$h$] (h) at (6,2) {};

        \node at (3,-1.2) {$T$};

        \path[edge] (a) edge (b);
        \path[edge] (c) edge (a);
        \path[edge] (d) edge (a);
        \path[edge] (e) edge [bend left=20] (a);
        \path[edge] (a) edge (f);
        \path[edge] (g) edge [bend left] (a);
        \path[edge] (h) edge (a);
        \path[edge] (b) edge (c);
        \path[edge] (d) edge (b);
        \path[edge] (e) edge (b);
        \path[edge] (f) edge [bend right=20] (b);
        \path[edge] (b) edge (g);
        \path[edge] (h) edge [bend right] (b);
        \path[edge] (d) edge (c);
        \path[edge] (c) edge (e);
        \path[edge] (f) edge (c);
        \path[edge] (g) edge [bend left=20] (c);
        \path[edge] (h) edge (c);
        \path[edge] (d) edge (f);
        \path[edge] (d) edge (e);
        \path[edge] (d) edge (g);
        \path[edge] (h) edge [bend right=20] (d);
        \path[edge] (f) edge (e);
        \path[edge] (e) edge (g);
        \path[edge] (e) edge (h);
        \path[edge] (f) edge (h);
        \path[edge] (g) edge (f);
        \path[edge] (g) edge (h);
    \end{scope}

    \end{tikzpicture}
    }
    \caption{Tournaments $G$, $H$, and $T$, each with chromatic number 2, for which a 0-step increases the chromatic number by 0, 1, and 2, respectively.}
    \label{fig:placeholder}
\end{figure}

For a 0-step, we found examples in which the chromatic number remains the same, increases by 1, or increases by 2. See Figure~2. In general, the chromatic number can at most double as a result of a 0-step. For $t\geq 1$ and $2t$-many 0-steps, we have the following upper bound on the chromatic number of $G_{2t}$, obtained using a partition of $\mathbb{Z}^{2t}_2$ into sets whose dot products between elements are either always 0 or always 1 (mod 2). By labeling each node of $G_{2t}$ with a binary string of length $2t$ and its parent node in $G_0$, we can describe the color classes in $G_{2t}$ using color classes in an optimal coloring of the base tournament and the partition of $\mathbb{Z}^{2t}_2$ described above. The full proof is omitted for space considerations. 

\begin{theorem}
    Let $G_0$ be a tournament and, for $t\geq 0$, let $G_{2t}$ be the tournament obtained by performing $2t$-many 0-steps. We then have that
    $$\chi(G_{2t}) \leq (2^{t+1}-1)\chi(G_0).$$ 
\end{theorem}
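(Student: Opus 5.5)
The plan is to encode every node of $G_{2t}$ by a pair $(x,b)$, where $x\in V(G_0)$ is its ancestor in $G_0$ and $b\in\mathbb{Z}_2^{2t}$ records, coordinate by coordinate, whether the node was produced as a clone ($b_i=1$) or kept as a parent ($b_i=0$) at step $i$. The first step is to derive an explicit arc rule. By induction on the number of 0-steps, I expect the following.

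\emph{Different ancestors.} For $x\neq y$, the arc between $(x,b)$ and $(y,c)$ is the $G_0$-arc between $x$ and $y$, reversed exactly when $b\cdot c\equiv 1 \pmod 2$. At a 0-step, arcs between a parent and a clone keep their orientation, and arcs between two clones reverse. So each coordinate with $b_i=c_i=1$ contributes one reversal.

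\emph{Same ancestor.} For $b\neq c$, the arc between $(x,b)$ and $(x,c)$ is determined by the last index $i$ where $b$ and $c$ differ. The node with $b_i=1$ points to the other (clone to parent), and this is reversed once for each $j>i$ with $b_j=c_j=1$.

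Next, I would fix a $\chi(G_0)$-coloring of $G_0$ with color classes $C_1,\dots,C_k$, and a partition $\mathcal{P}$ of $\mathbb{Z}_2^{2t}$ into $2^{t+1}-1$ sets $A$. Each $A$ is required to have a constant dot product: $b\cdot c\equiv\varepsilon_A$ for all $b,c\in A$, including $b=c$. The colors of $G_{2t}$ are the sets $C_j\times A$, which gives $(2^{t+1}-1)\chi(G_0)$ classes.

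\emph{Why each class is a linear order.} By the first rule, arcs between distinct fibres in $C_j\times A$ follow the linear order on $C_j$, or all of its reverse when $\varepsilon_A=1$. Hence $C_j\times A$ is obtained by substituting the fibre tournaments $\{x\}\times A$ into a linear order. It is therefore transitive if and only if each fibre $\{x\}\times A$ is transitive.

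\emph{Constructing the partition.} I would build $\mathcal{P}$ recursively in pairs of coordinates. For $t=1$, take $\{00,11\},\{10\},\{01\}$, which is $3=2^2-1$ sets. For the step from $2t$ to $2t+2$, use $2^{t+2}-1=2(2^{t+1}-1)+1$. The idea is to extend each old class $A$ by two suitable $2$-bit suffixes chosen according to $\varepsilon_A$, giving two new classes per old class. These suffixes must preserve a constant dot product and cover all four suffixes over the pair of old classes. One extra class, essentially $\{0\}\times\{10,01\}$-type leftovers, absorbs what remains. I would tune the suffix choices, for instance pairing $A$ with $\{00,11\}$ and with $\{10\}$ or $\{01\}$, so that both the dot-product constancy and the counting work. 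I would check this first on $\mathbb{Z}_2^4$ by hand.

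The main obstacle is the transitivity of each fibre $\{x\}\times A$ under the second rule, which is not automatic: a single fibre over all of $\mathbb{Z}_2^{2t}$ is far from transitive. My first attempt would be to show that when all pairwise dot products in $A$ are equal, the orientation between $b$ and $c$ is governed by comparing them lexicographically from the last coordinate. The idea is that the number of common $1$'s after the last disagreement has a parity forced by the constant-dot-product condition together with the recursive pairwise structure. If this fails, I would strengthen the recursive construction so that every class is additionally ordered consistently by the block in which elements first differ, and prove the linear-order property inductively block by block.
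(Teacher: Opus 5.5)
Your architecture matches the one the paper sketches: labels $(x,b)$, a partition of $\mathbb{Z}_2^{2t}$ into $2^{t+1}-1$ sets with constant dot product (self-products included), classes $C_j\times A$, and the substitution argument between fibres. Your rule for different ancestors is also correct.

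The same-ancestor rule is wrong, however, and this is why the fibre step looks like an obstacle to you. The orientation between $(x,b)$ and $(x,c)$ is set at the \emph{first} index $i$ with $b_i\neq c_i$. Since $b_{<i}=c_{<i}$, at step $i$ the two ancestors are a node and its own clone, so the one with a $1$ in position $i$ points to the other. Afterwards the arc flips once for each $j>i$ with $b_j=c_j=1$. At the last differing index the ancestors are generally not a parent--clone pair. For $b=10$, $c=01$ the nodes are $x'$ and $x''$ (the clone of $x$, not of $x'$), and $x'\to x''$ is inherited from $x'\to x$, whereas your rule gives $x''\to x'$. Your rule depends only on the common suffix after the last difference, so it would make all of $\{x\}\times\mathbb{Z}_2^{2t}$ a linear order. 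That contradicts your own remark and the $3$-cycle $x'\to x''\to(x')'\to x'$ in $G_2$.

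With the correct rule the difficulty disappears. For $b,c\in A$ with common prefix $p=b_{<i}$ we have $b\cdot c\equiv|p|+\sum_{j>i}b_jc_j$, so $\sum_{j>i}b_jc_j\equiv\varepsilon_A+|p|$. The winner therefore depends only on $p$ and the bit in position $i$: the $1$-branch wins iff $|p|\equiv\varepsilon_A$. Hence $\{x\}\times A$ is ordered lexicographically with a prefix-dependent order on $\{0,1\}$, which is a linear order. The constant dot product alone suffices, with no block structure needed. Your ``first attempt'' is exactly this argument with ``last'' in place of ``first''.

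The second gap is that the partition is never produced, and the recursion you sketch fails as described. Suppose you use classes $A\times\{00,11\}$ and $A\times\{s_A\}$ with $s_A\in\{10,01\}$. The uncovered set $\bigcup_A A\times\{\bar s_A\}$ then has $2^{2t}$ elements of both weight parities, since each $(a,\bar s_A)$ has weight $|a|+1$ as $a$ ranges over $\mathbb{Z}_2^{2t}$. So no single class of constant dot product can absorb it, and $\{0\}\times\{10,01\}$ is not even constant.

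More generally, $A\times S$ is constant only if $S$ is, and $\mathbb{Z}_2^2$ needs three constant sets, because $\{00,11\}$ is its only constant pair. Classes that are products with a single old class therefore cost $3(2^{t+1}-1)$. Reaching $2(2^{t+1}-1)+1$ requires classes mixing two old classes, such as $A\times\{10\}\cup B\times\{01\}$ with $\varepsilon_A=\varepsilon_B=\varepsilon$ and $a\cdot b\equiv\varepsilon+1$ for all $a\in A$, $b\in B$. That in turn needs a stronger induction hypothesis.

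A direct construction avoids recursion. The set $L=\{b:\ b_{2i-1}=b_{2i}\text{ for all }i\}$ satisfies $L^\perp=L$. Take $L$ itself, and for each of the $2^t-1$ other cosets $v+L$ take the two halves $\{l+v:\ l\in L,\ l\cdot v=c\}$ for $c\in\{0,1\}$. On each half $(l+v)\cdot(l'+v)=l\cdot l'+(l+l')\cdot v+v\cdot v\equiv|v|$. This gives exactly $1+2(2^t-1)=2^{t+1}-1$ classes.
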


We define a sequence $S_i$ of tournaments as follows, first given in \cite{hero_coloring}. Let $S_1$ be the one-node tournament. For $i\geq 2$, let $S_i = \Delta(S_{i-1},S_{i-1},1)$ be the tournament formed by taking the disjoint union of two copies of $S_{i-1}$, call them $S^{(1)}_{i-1}$ and $S^{(2)}_{i-1}$, and a single node $v$. Add all possible arcs from $S^{(1)}_{i-1}$ to $S^{(2)}_{i-1}$, from $S^{(2)}_{i-1}$ to $v$, and from $v$ to $S^{(1)}_{i-1}$. Note that $S_i$ contains $2^i-1$ nodes. 

\begin{theorem} [\cite{hero_coloring}] For $i\geq 1$, $\chi(S_i)\geq i$. 
    \label{thm:S_i}
\end{theorem}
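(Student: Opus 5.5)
We argue by induction on $i$, with base case $i=1$: $S_1$ is nonempty, so any coloring uses at least one color. For the inductive step we assume $\chi(S_{i-1})\geq i-1$. Suppose, for contradiction, that $\varphi$ is a coloring of $S_i=\Delta(S_{i-1},S_{i-1},1)$ with at most $i-1$ colors. Let $v$ be the extra node, $c=\varphi(v)$, and let $C$ be the color class containing $v$.

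The key structural fact is that any directed triangle $a\to b\to v\to a$ with $a\in S^{(1)}_{i-1}$ and $b\in S^{(2)}_{i-1}$ lies in $S_i$. This holds because every arc from $S^{(1)}_{i-1}$ to $S^{(2)}_{i-1}$, from $S^{(2)}_{i-1}$ to $v$, and from $v$ to $S^{(1)}_{i-1}$ is present. Since a linear order contains no $D_3$, the class $C$ cannot meet both $S^{(1)}_{i-1}$ and $S^{(2)}_{i-1}$. So one of the two copies, say $S^{(j)}_{i-1}$, contains no node of color $c$.

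Restricting $\varphi$ to $S^{(j)}_{i-1}$ gives a valid coloring, because a subset of a linear order induces a linear order. This restriction uses only the colors other than $c$, so at most $i-2$ colors. That contradicts the induction hypothesis $\chi(S_{i-1})\geq i-1$. Hence $\chi(S_i)\geq i$.

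The one step needing care is the triangle observation: we must check the arc directions in the definition of $\Delta$ so that $v$, one node from each copy, really forms a $D_3$ rather than a $T_3$. With the orientations as defined (copy 1 to copy 2, copy 2 to $v$, $v$ to copy 1), it does. Everything else is routine.
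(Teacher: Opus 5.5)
Your proof is correct: the color class of $v$ cannot meet both copies, since any $a\in S^{(1)}_{i-1}$ and $b\in S^{(2)}_{i-1}$ give the directed triangle $a\to b\to v\to a$, so one copy is colored with at most $i-2$ colors, which contradicts the induction hypothesis. The paper gives no proof of its own and simply cites \cite{hero_coloring}; your induction is the standard self-contained argument for that result.
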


We have the following corollary, which shows that the chromatic numbers of ILMT tournaments are unbounded. 

\begin{corollary}
   Let $i\geq 1$ and $G_0$ be a tournament with $n_0 \geq 2^i-1$ nodes. Let $G_t$ be the tournament obtained by performing $t$-many 0-steps. We then have that $\log_2(t+1) \leq \chi(G_t).$   
\end{corollary}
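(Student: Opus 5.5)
We will prove the equivalent integer statement $\chi(G_t)\geq j$ with $j=\lceil \log_2(t+1)\rceil$. Since $2^{j-1}<t+1$, we have $t\geq 2^{j-1}$, so it suffices to exhibit a copy of $S_j$ inside $G_t$ whenever $t\geq 2^{j-1}$; Theorem~\ref{thm:S_i} then gives $\chi(G_t)\geq\chi(S_j)\geq j$.

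Two easy facts come first. First, $\chi$ is nondecreasing in $t$, since $G_{t-1}$ is a subtournament of $G_t$. So it is enough to treat $t=2^{j-1}$. Second, the order of the base tournament is not really a constraint. We may apply the universality argument of Theorem~\ref{lemma:0steps} with base $G_s$ for any $s\geq 0$, and $G_s$ has $2^sn_0$ nodes. The hypothesis $n_0\geq 2^i-1$ only serves to make the base large enough at the start; once it fails we spend an early step on enlarging the base. I will also record the sharpened count from the proof of Theorem~\ref{lemma:0steps}: processing a node $u$ fixes every arc between $u$ and the other chosen nodes. Hence after $n-1$ nodes have been processed, all arcs are correct, and any $n$-node tournament embeds after only $n-1$ 0-steps.

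The core step is a \emph{parallel} version of that embedding, tailored to the recursive structure $S_j=\Delta(S_{j-1},S_{j-1},1)$. I would use two features of this structure.
\begin{enumerate}
\item $S_j$ is self-converse. Reversing all arcs and swapping the two copies of $S_{j-1}$ gives an isomorphism, by induction on $j$. This matters because the clones in a 0-step induce the reverse of the previous tournament, and so still induce a copy of $S_{j-1}$.
\item Between the two halves, and between each half and the apex $v$, all arcs point the same way.
\end{enumerate}
The plan is to process the two halves simultaneously. In a single 0-step, a node $a$ of the first half and its counterpart $b$ of the second half are both processed; clones are chosen for each so that the arcs inside each half are corrected independently. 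Because the cross arcs are uniform, a single additional choice of which half to clone keeps them pointing the right way. If this works, embedding $S_j$ costs about as many steps as embedding $S_{j-1}$, plus $O(1)$ steps for the apex and the cross arcs. The recursion should then give a number of steps of order $2^{j-1}$, and I will aim to show it is at most $2^{j-1}$.

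The main obstacle is this bookkeeping. One has to check that one 0-step can perform the required flips in both halves at once without spoiling the cross arcs or the arcs to $v$. One also has to check that the constant overhead per level actually fits inside the budget $t=2^{j-1}$, rather than only inside the weaker $2^j-2$ that the sequential count gives. If the parallel step cannot be made exact, my fallback is to exploit the clone-reversal symmetry directly. If $G_t$ contains $S_{j-1}$ on a set $A$, then $G_{t+1}$ contains both $A$ and the reversed copy $A'$, which is again isomorphic to $S_{j-1}$. I would then use further 0-steps to orient every arc from the first copy to the second and to manufacture an apex, counting the steps this uses.
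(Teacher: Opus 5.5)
Your reduction to $\chi(G_t)\ge j$ with $j=\lceil\log_2(t+1)\rceil$, and the monotonicity remark, are fine. The problem is that everything then rests on the claim that $S_j$ embeds in $G_{2^{j-1}}$ for every base. You never prove this; the proposal describes a plan and itself flags the key step as unverified. The parallel step also cannot work as described. If in a 0-step you replace a set $C$ of current representatives by their clones, the arc between two representatives is reversed exactly when both lie in $C$. So processing a node of the first half and its counterpart in the second half in the same step reverses every cross arc between $C\cap A$ and $C\cap B$, including the arc between the two processed nodes. The cross arcs are then no longer uniform, and no choice of ``which half to clone'' repairs this within the step. In terms of vectors $b_x\in\mathbb{F}_2^t$ recording the steps in which $x$ is cloned, an arc between representatives of distinct starting nodes ends up reversed iff $b_x\cdot b_y=1$. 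Uniform cross arcs force $b_x\cdot b_y$ to be constant on $A\times B$; for the constant $0$ the two spans are orthogonal, so their dimensions add up to at most $t$, and the halves cannot share steps freely. Your fallback has the same defect: the arcs between $A$ and $A'$ copy the arcs of $S_{j-1}$ (plus $x'\to x$), so they are far from uniform. In fact the sufficient condition you reduced to is false in a case the statement allows. Take a one-node base (so $i=1$). Then $G_3$ is a sink together with the Paley tournament on $\mathbb{Z}_7$ (arcs $i\to j$ iff $j-i\in\{1,2,4\}$). A case check on the image of the apex, using the $\mathbb{Z}_7$-symmetry of $G_4$, shows that no $3$-cycle among the apex's out-neighbours beats a $3$-cycle among its in-neighbours. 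Hence $G_4$ contains no copy of $S_3$, although $t=4$ gives $j=3$ in your reduction. The bound $\chi(G_4)\ge 3$ does still hold there, but because of the Paley tournament, not $S_3$.

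The paper's proof establishes the bound at $t=2^i-1$, which is exactly what the hypothesis $n_0\ge 2^i-1$ is for. Pick $2^i-1=|V(S_i)|$ nodes of $G_0$ and apply Theorem~\ref{lemma:0steps} (with only 0-steps, $r=2^i-1$) to find $S_i$ inside $G_{2^i-1}$. Then Theorem~\ref{thm:S_i} gives $\chi(G_{2^i-1})\ge\chi(S_i)\ge i=\log_2(t+1)$. Your own sharpened $(n-1)$-step count already embeds $S_i$ in $G_{2^i-2}\subseteq G_{2^i-1}$, so you had everything needed. By discarding the hypothesis on $n_0$ and aiming at every $t$, you took on a strictly stronger claim that the paper does not prove, and the route through $S_j\hookrightarrow G_{2^{j-1}}$ cannot deliver it in general.
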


\begin{proof}
By Theorem~\ref{lemma:0steps}, at most $t=2^i-1$ many 0-steps are required to guarantee that $G_t$ contains the tournament $S_i$ as a subtournament. This subtournament requires $i=\log_2(t+1)$ colors by Theorem \ref{thm:S_i}, and therefore $G_t$ does as well. \qed
\end{proof}

\section{Discussion and Future Work}

The ILMT model produces tournaments with small diameters and increasing connectivity. Given a generating sequence with infinite support, the resulting ILMT tournaments contain all motifs as subtournaments. Further, such generating sequences give rise to continuum-many non-isomorphic quasirandom families of tournaments. We also determined how domination numbers, the cop number, and the chromatic number behave under the 0- and 1-steps in the model.

Several directions remain open. A probabilistic version of ILMT, where each time-step applies a $0$- or $1$-step with given probability, would be interesting to analyze. The precise growth of the chromatic number under general infinite sequences also remains open. While we have a logarithmic lower bound for the chromatic number of ILMT tournaments, their actual chromatic number is likely much larger. It is unclear how close the upper bound given in Theorem 12 is to being optimal. 

While ILMT applies to dense communities in complex networks, a natural extension of ILMT is to sparser oriented graphs. For example, in a 1-step, include the arcs $(x',u)$ for each node $x$ and out-neighbor $u$ of $x$ in $G_{t-1}$; in a 0-step, include the arcs $(x',v)$ for each node $x$ and in-neighbor $v$ of $x$ in $G_{t-1}$. See Figure~\ref{ILMO}.
\begin{figure}[H]
    \centering
    \scalebox{0.7}{
\begin{tikzpicture}[
    vtx/.style={circle, fill=black, inner sep=1.5pt},
    arr/.style={->, >={latex[length=2.5mm, width=1.5mm]}, shorten >=1pt, shorten <=1pt}
]
    \begin{scope}
        \node[vtx, label=below:{$a$}] (g1_a1) at (0, 0) {};
        \node[vtx, label=above:{$b$}] (g1_a) at (0, 2) {};

        \node at (0, -1) {$G_0$}{};
        
        \draw[arr] (g1_a1) to (g1_a);
    \end{scope}

    \begin{scope}[xshift=2.5cm]
        \node[vtx, label=below:{$a$}] (g2_a) at (0, 0) {};
        \node[vtx, label=above:{$b$}] (g2_b) at (0, 2) {};
        \node[vtx, label=below:{$a'$}] (g2_a1) at (2, 0) {};
        \node[vtx, label=above:{$b'$}] (g2_b1) at (2, 2) {};

        \node at (1, -1) {$G_1$}{};
        
        \draw[arr] (g2_a1) to (g2_a);    
        \draw[arr] (g2_b1) to (g2_b);    
        \draw[arr] (g2_a1) to (g2_b);    
        \draw[arr] (g2_a) to (g2_b);     
    \end{scope}

    \begin{scope}[xshift=7cm]
        \node[vtx, label=below:{$a$}] (a) at (0, 0) {};
        \node[vtx, label=above:{$b$}] (b) at (0, 2) {};
        \node[vtx, label=below:{$a'$}] (a1) at (2, 0) {};
        \node[vtx, label=above:{$b'$}] (b1) at (2, 2) {};
        \node[vtx, label=below:{$a''$}] (a2) at (4, 0) {};
        \node[vtx, label=above:{$b''$}] (b2) at (4, 2) {};
        \node[vtx, label=below:{$(a')'$}] (a3) at (6, 0) {};
        \node[vtx, label=above:{$(b')'$}] (b3) at (6, 2) {};
    
        \node at (3, -1) {$G_2$};
    
        \draw[arr] (a) to (b);
        \draw[arr] (a1) to (a);
        \draw[arr] (a1) to (b);
        \draw[arr] (b1) to (b);
        \draw[arr] (a2) to (a1);
        \draw[arr] (a2) to [bend left=35] (a);
        \draw[arr] (b2) to (a1);
        \draw[arr] (b2) to (a);
        \draw[arr] (b2) to [bend right=35] (b);
        \draw[arr] (b2) to (b1);
        \draw[arr] (a3) to [bend left=35] (a1);
        \draw[arr] (b3) to [bend right=35] (b1);
        
    \end{scope}

\end{tikzpicture}}

    \caption{A sequence of oriented graphs formed by a 1-step and then a 0-step.}
    \label{ILMO}
\end{figure}
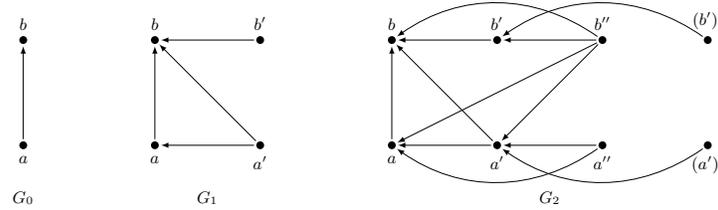
Oriented graphs generated by this model will not be tournaments, but will provably generate densifying families. 
It would be interesting to analyze the complex network and graph-theoretic properties of such iterated oriented models in future work.


\begin{thebibliography}{99}

\bibitem{bang-jensen-gutin-2009} J.\ Bang-Jensen, G.\ Gutin, \emph{Digraphs}, Springer-Verlag London, Ltd., London, 2009.

\bibitem{ba} A.\ Barab\'{a}si, R.\ Albert, Emergence of scaling in random networks, \emph{Science} \textbf{286} (1999) 509--512.

\bibitem{ILH} N.\ Behague, A.\ Bonato, M.\ Huggan, R.\ Malik, T.\ Marbach, The iterated local transitivity model for hypergraphs, \textit{Discrete Applied Mathematics} \textbf{337} (2023) 106--119.

\bibitem{hero_coloring} E.\ Berger, K.\ Choromanski, M.\ Chudnovsky, J.\ Fox, M.\ Loebl, A.\ Scott, P.\ Seymour, S.\ Thomass\'e,
Tournaments and coloring,
\textit{Journal of Combinatorial Theory, Series B} \textbf{103} (2013) 1--20.

\bibitem{bol} B.\ Bollobás, O.\ Riordan, J.\ Spencer, G.\ Tusnády, The degree sequence of a scale-free random graph process,
\textit{Random Structures \& Algorithms} \textbf{18} (2001) 279--290.

\bibitem{bbook} A.\ Bonato, \emph{A Course on the Web Graph}, American Mathematical Society, Providence, Rhode Island, 2008.

\bibitem{ILTT} A.\ Bonato, K.\ Chaudhary, The iterated local transitivity model for tournaments,
In: \textit{Proceedings of WAW'23}, 2023. 

\bibitem{ILM} A.\ Bonato, H.\ Chuangpishit, S.\ English, B.\ Kay, E.\ Meger, The iterated local model for social networks, \textit{Discrete Applied Mathematics} \textbf{284} (2020) 556--571.

\bibitem{ILDT} A.\ Bonato, D.\ W.\ Cranston, M.\ A.\ Huggan, T.\ Marbach, R.\ Mutharasan, The iterated local directed transitivity model for social networks, In: \textit{Proceedings of WAW'20}, 2020.

\bibitem{FM} A.\ Bonato, R.\ Cushman, T.\ Marbach, Z.\ Zhang, The frustum network model based on clique extension, \textit{Journal of Combinatorial Optimization} \textbf{47} (2024).

\bibitem{ILT} A.\ Bonato, N.\ Hadi, P.\ Pralat, C.\ Wang,
Models of on-line social networks, \textit{Internet Mathematics} \textbf{6} (2011) 285--313.

\bibitem{ILAT} A.\ Bonato, E.\ Infeld, H.\ Pokhrel, P.\ Pralat, Common adversaries form alliances: modelling complex networks via anti-transitivity,
In: \textit{Proceedings of WAW'17}, 2017.

\bibitem{IGM} A.\ Bonato, E.\ Meger,
Iterated global models for complex networks,
In: \textit{International Workshop on Algorithms and Models for the Web-Graph} (2020) 135--144.

\bibitem{cops} A.\ Bonato, R.J.\ Nowakowski, \emph{The Game of Cops and Robbers on Graphs}, American Mathematical
Society, Providence, Rhode Island, 2011.

\bibitem{at} A.\ Bonato, A.\ Tian, Complex networks and social networks, invited book chapter in: \emph{Social Networks},  editor E.\ Kranakis, 2011.

\bibitem{mscthesis} K.\ Chaudhary,
\emph{The Iterated Local Transitivity Model For Tournaments}, Master's thesis, 2023.

\bibitem{fan}
F.R.K.\ Chung, R.L.\ Graham,
Quasi-random tournaments,
\textit{Journal of Graph Theory} \textbf{15} (1991) 173–198.

\bibitem{easl} D.\ Easley, J.\ Kleinberg,
\emph{Networks, crowds, and markets: Reasoning about a highly connected world},
Cambridge University Press, 2010.

\bibitem{guha} R.\ Guha, R.\ Kumar, P.\ Raghavan, A.\ Tomkins,
Propagation of trust and distrust,
In: \textit{Proceedings of the 13th international conference on World Wide Web}, 2004.

\bibitem{Halliday_thesis} F.\ Halliday,
\emph{Cops, robbers and pre-calculus skills},
Masters Thesis, 2019.

\bibitem{raz}
R.\ Hancock, A.\ Kabela, D.\ Král', T.\ Martins, R.\ Parente, F.\ Skerman, J.\ Volec,
No additional tournaments are quasirandom-forcing.
\textit{European Journal of Combinatorics} \textbf{108} (2023) 103632.

\bibitem{he} F.\ Heider, \emph{The Psychology of Interpersonal Relations}, John Wiley \& Sons, 1958.

\bibitem{lesk} J.\ Leskovec, D.\ Huttenlocher, J.\ Kleinberg,
Signed networks in social media,
In: \textit{Proceedings of the SIGCHI conference on human factors in computing systems}, 2010.

\bibitem{itind} E.\ Meger, A.\ Raz,
The Iterative Independent Model,
\textit{Discrete Applied Mathematics} \textbf{341} (2023) 242--256.

\bibitem{Privault2018}N. Privault, Long-Run Behavior of Markov Chains. {\em Understanding Markov Chains: Examples And Applications}. pp. 163-188 (2018).

\bibitem{Milo}
R.\ Milo, S.\ Shen-Orr, S.\ Itzkovitz, N.\ Kashtan, D.\ Chklovskii, U.\ Alon,
Network motifs: simple building blocks of complex networks.
\textit{Science} \textbf{298} (2002) 824--827.

\bibitem{scott} J.P.\ Scott, \emph{Social Network Analysis: A Handbook}, Sage Publications 
Ltd, London, 2000.

\bibitem{west} D.B.\ West, \emph{Introduction to Graph Theory, 2nd edition}, Prentice Hall, 2001.

\end{thebibliography}
\end{document}